\documentclass[a4paper,11pt]{article}
\pdfoutput=1 
\usepackage{jcappub} 
\usepackage[T1]{fontenc}
\usepackage{mathtools}
\usepackage{amsthm}
\usepackage{amsmath}
\usepackage{amssymb}
\usepackage{graphicx}
\usepackage{dcolumn}
\usepackage{bm}
\usepackage{mathtools}
\newtheorem{theorem}{Theorem}[section]

\usepackage{graphicx}
\usepackage{float}
\usepackage{tikz} 
\usetikzlibrary{angles}
\usetikzlibrary{automata,positioning}
\usepackage{hyperref}
\usepackage{bm}
\usepackage{longtable}
\title{\boldmath On the physical origin of the radial acceleration relation}
\author[a,1]{Gary Nash,\note{Corresponding author.}}
\affiliation[a,1]{University of Alberta,\\Edmonton, Alberta, Canada, T6G 2R3}
\emailAdd{gnash@ualberta.net\\Present Address: Edmonton, Alberta}

\abstract{The radial acceleration relation (RAR) is a tight correlation between the distribution of baryonic matter in galaxies and their observed dynamics, even in systems where dark matter dominates. This suggests a universal acceleration scale exists in galaxies, which appears to challenge the $\Lambda$CDM standard cosmological model and favor Modified Newtonian dynamics (MOND). However, MOND also deviates from the RAR, so these deviations are not fully explained by the $\Lambda$CDM model or MOND. A theory that explains gravity and the dark sector geometrically, and treats dark matter as a spin-1 particle, is required, such as Modified General Relativity (MGR). It is the only metric-compatible theory of gravity that describes local gravitational energy-momentum by the connection-independent symmetric tensor $\varPhi_{\alpha\beta}$, which completes the total energy-momentum tensor of the Einstein equation. Dark energy is explained by the energy-related component $\varPhi_{00}$, and dark matter by the stress components $\varPhi_{ij}$. Dark matter is extremely difficult to observe because it is fundamentally related to the Lorentzian metric of spacetime itself. The physical origin of the RAR follows directly from the equilibrium between the dark energy and Newtonian forces in the outer part of the rotation curve of galaxies or galaxy clusters, and that between the dark sector forces in the inner part of the rotation curve. It is shown that MGR subsumes MOND, accounts for the gravitational anomaly observed in wide binaries, and exhibits the observed Newtonian behavior of gravity at large scales.}
\keywords{General Relativity; energy-momentum; gravity; dark energy; dark matter; radial acceleration; equilibrium}
\begin{document}
\maketitle
	\flushbottom
		
\section{Introduction}
Within the realm of General Relativity (GR), the precise origin of the radial acceleration relation (RAR) is not definitively known and is currently a subject of intense debate in astrophysics. The RAR is an empirical observation \cite{McG} that creates a tight, consistent correlation between the observed acceleration $a_{obs}$ of stars/gas in a galaxy and the acceleration expected from baryonic (ordinary) matter alone $a_{bar}$; it applies to both late time and early time galaxies and subsumes and generalizes several well-known dynamical properties of galaxies, like the Tully-Fisher and Faber-Jackson relations, the baryon-halo conspiracies, and Renzo's rule \cite{Lelli}. The tight correlation between the distribution of baryonic matter in galaxies and their observed dynamics, even in systems where the dark matter dominates, is puzzling in the standard $\Lambda$-cold dark matter ($\Lambda$CDM) model\cite{Julio}.
\par The RAR was inspired by  Modified Newtonian Dynamics (MOND)\cite{Mila,Milb,Milc}. It proposes that Newtonian dynamics holds at high accelerations. However, as acceleration drops below the fundamental constant value  $A_{0}= 1.20\pm 0.02\times10^{-10}\; ms^{-2}$ where Newtonian gravity breaks down, the dynamical mass increasingly exceeds the baryonic mass such that $a_{obs}\propto\sqrt{a_{bar}}$ in the weak-field regime. MOND is a phenomenological theory that dispenses with dark matter. The RAR is empirically described \cite{McG} by the relation obtained from an interpolation function in MOND:
\begin{equation}\label{RAR}
	a_{obs}=\frac{a_{bar}}{1-e^{-\sqrt{\frac{a_{bar}}{A_{0}}}}}.
\end{equation}
\par MOND reproduces the RAR with zero scatter \cite{Lelli}. However, the external field effect in it causes galaxies in strong external fields to deviate from the RAR \cite{Gar}. Moreover, MOND does not follow the strong equivalence principle, so the internal dynamics of satellite galaxies orbiting in the tidal field of their host galaxy experience an external field effect \cite{Julio}. The difficult challenge for MOND is to find a satisfactory covariant theory that reproduces all the successes of General Relativity as well as MOND in the appropriate limits. 
\par Galaxy formation in $\Lambda$CDM proceeds in a highly stochastic and complicated manner that makes it difficult to explain the RAR. However, hydrodynamical simulations within the $\Lambda$CDM paradigm \cite{Paran} demonstrate that the RAR naturally arises from baryonic physics and galaxy formation feedback. Consequently, the RAR does not falsify the standard dark matter model.  
\par The RAR subsumes the baryonic Tully-Fisher relation (BTFR) $M=Av^{x}$. It relates the total baryonic mass $M$ of a galaxy to its asymptotically flat rotational velocity $v$, and has been well established from astronomical data\cite{Schom}. The standard $\Lambda$CDM model predicts an exponent between three and four because the BTFR must emerge from the complex process of galaxy formation, and is expected to show significant intrinsic scatter  \cite{Goddy}. However, the BTFR exhibits an exceptionally low intrinsic scatter, which poses a significant challenge for the $\Lambda$CDM cosmological model \cite{LMS}. In contrast, MOND predicts a BTFR slope of exactly four with no scatter. Nevertheless, the large departure of super spirals from a power-law BTFR with a slope of 4.0 is inconsistent with MOND\cite{Ogle}.
\par Recent studies of wide binary stars reveal an unambiguous and strong breakdown of Newtonian dynamics and GR at low accelerations  $ <10^{-9}m s^{-2}$ \cite{Chae}. What is even more surprising is that the trend and magnitude of the gravitational anomaly agree with what the A-QUAdratic Lagrangian theory (AQUAL) predicts \cite{Beck}. 
Given the inferred local volume density of dark matter, its total content within a wide binary orbit is assumed to be negligible in comparison to the masses of the stars themselves. Thus, based on this assumption, any gravitational anomaly of the type encountered at galactic scales and beyond, found in wide binary stars, cannot comfortably be ascribed to the presence of dark matter\cite{Hern1}. However, this analysis did not include dark energy and the assumption that dark matter is negligible is questionable. This gravitational anomaly presents immense implications for astrophysics, cosmology, and fundamental physics. 
\par The crux of these tensions or anomalies is whether or not dark matter exists; the $\Lambda$CDM standard model fundamentally depends on it whereas MOND and AQUAL dispense with dark matter. To understand the physical origin of the RAR, we must first comprehend the nature of dark matter and dark energy, which should also elucidate the wide binaries anomaly and the large scale nature of gravity.
\par It is well known that GR lacks a symmetric tensor that describes local gravitational energy-momentum; the Einstein equivalence principle forbids it \cite{MWT}. That foundational issue is resolved in Modified General Relativity (MGR) \cite{N,ND,NP}, which fundamentally introduces a connection-independent symmetric tensor $\varPhi_{\alpha\beta}$ describing local gravitational energy-momentum. $\varPhi_{\alpha\beta}$ is an integral part of the total energy-momentum tensor in MGR. Its energy component $\varPhi_{00}$ and the stress components $\varPhi_{ij}\;i,j=1,2,3$ describe dark energy and dark matter, respectively. The dynamics between dark energy and ordinary matter fundamentally replace the empirical relation (\ref{RAR}) with the derived expression (\ref{aout}). How $\varPhi_{\alpha\beta}$ enters the Einstein equation, both geometrically and dynamically, and describes dark energy and dark matter is now reviewed.
\subsection{Overview of Modified General Relativity}
To discover $\varPhi_{\alpha\beta}$, it is important to understand the fundamental structure of all Lorentzian metrics in terms of the line element vector field. Every noncompact paracompact differentiable manifold $ \mathcal{M} $ admits a continuous non-zero vector field and thus a continuous line element field \cite{Markus}. There exists a differentiable regular (everywhere nonvanishing) vector field $\bm{X}$ collinear to a differentiable nonvanishing unit vector $ \bm{u} $ by $\bm{X}=f\bm{u}$ where $ f\neq0$ is the scalar magnitude of $ \bm{X}$. The line element (or direction) field $(\bm{X},-\bm{X})$ is defined as an assignment of a pair of equal and opposite vectors at each point of $\mathcal{M}$. It is a one-dimensional vector subspace of the tangent space on $ \mathcal{M} $, or equivalently, a line subbundle $l$ of the tangent bundle on $\mathcal{M}$. 
\par This leads to a technical issue that needs to be addressed. While $\mathcal{M}$ admits a nonvanishing vector field, it does so as a vector field with isolated zeros \cite{Markus}, which can be swept out to infinity. However, the resulting nonvanishing vector field is unsatisfactory because it may not be bounded \cite{kato}. In MGR, the line subbundle $l$ is defined from the vectors (\ref{u}) obtained from the variation of (\ref{S}) with respect to the unit vector $u^{\nu}$. Boundedness is guaranteed from the structure of the covectors and vectors in $l$ by demanding $u_{\alpha}$, $u^{\alpha}$, $\partial_{\alpha}f$, $\partial^{\alpha}f$, and $f$ to be bounded. $\Phi$ is bounded because it satisfies (\ref{intPhi}). Thus, the line subbundle $l$ is bounded and is constructed from regular vector fields. This ensures the boundedness of solutions to the Einstein equation in MGR that explicitly depend on the line element vectors in $l$. 
\par A Lorentzian metric $g_{\alpha\beta}$ with a +2 signature is related to a Riemannian metric by
\begin{equation}\label{gab}
	g_{\alpha\beta}=g_{\alpha\beta}^{+}-2u_{\alpha}u_{\beta}
\end{equation}where the Riemannian metric $g^{+}_{\alpha\beta}$ is independent of the covectors $u_{\alpha}$ and the unit vectors satisfy $g_{\alpha\beta}u^{\alpha}u^{\beta}=-1$, and $g^{+}_{\alpha\beta}u^{\alpha}u^{\beta}=1$. A noncompact paracompact manifold admits a Lorentzian metric $ g_{\alpha\beta} $ if and only if it admits a line element field \cite{CB,Hawk}. The regular vectors in the line element field are not introduced arbitrarily as absolute vectors in a Lorentzian spacetime $(\mathcal{M},g_{\alpha\beta})$ because it does not exist without them. 
\par The line element vectors at each point of a Lorentzian spacetime can be divided into three classes depending on whether $X_{\beta}X^{\beta}$ is negative, positive or zero: timelike, spacelike, and null, respectively. The null vectors form the familiar double cones in the tangent space on $\mathcal{M}$, which separates the timelike vectors from the spacelike
vectors. A Lorentzian metric has a directional characteristic that is described by the line element vectors associated with the metric.
\par MGR selects a particular bounded $l$ consisting of those unit covectors satisfying (\ref{u}) constrained by (\ref{nu}) from the myriad of covectors that could belong to that $l$. MGR depends explicitly on the Riemannian metric on $\mathcal{M}$ and the collinear vectors \textbf{u} and $\textbf{X}$ in $l$ from the introduction of $\varPhi_{\alpha\beta}$ into the complete energy-momentum tensor of the Einstein equation. The explicit dependence on the vectors in the bounded line subbundle of MGR provides it with the extra freedom required to describe dark energy, and dark matter without introducing a dark matter profile; higher than second-order derivatives or higher than four-dimensional spacetimes are not necessary.
\par Whereas GR is constructed in the tangent bundle on $\mathcal{M}$ that contains a regular timelike vector $\textbf{X}$ and a Riemannian metric. GR describes the gravitation of ordinary matter with its definitive dependence on the Riemannian metric; GR exists on a Lorentzian manifold but does not depend explicitly on $\textbf{X}$.  
\par The geometrical development of $\varPhi_{\alpha\beta}$ follows by re-introducing Einstein's postulate \cite{EinGros,Ein16} of a \emph{total} 
energy-momentum tensor $T_{\alpha\beta}$, which must contain the matter energy-momentum tensor $\tilde{T}_{\alpha\beta}$ and another symmetric tensor that geometrically describes the energy-momentum of the gravitational field. The local conservation of total energy-momentum then requires $\nabla^{\alpha}T_{\alpha\beta}=0$. $\tilde{T}_{\alpha\beta}$ is non-divergenceless and can be orthogonally decomposed by the Orthogonal Decomposition Theorem (ODT) in Appendix A. That exposes the Lovelock tensors \cite{Love} $g_{\alpha\beta}$ and $G_{\alpha\beta}$ in a four-dimensional spacetime, and introduces a new symmetric tensor that represents the energy-momentum of the gravitational field, $\varPhi_{\alpha\beta}$. The ODT and Lovelock's theorem fundamentally and geometrically generate the modified Einstein equation  
\begin{equation}\label{MEQ}
	\Lambda g_{\alpha\beta}+G_{\alpha\beta}+\varPhi_{\alpha\beta}=\kappa\tilde{T}_{\alpha\beta}
\end{equation}
where $\kappa:=\frac{8\pi G}{c^{4}}$, $\Lambda$ is the cosmological constant, and
\begin{equation}\label{Phiab}
	\begin{split}
		\varPhi_{\alpha\beta}=\frac{1}{2}\mathcal{L}_{X}g^{+}_{\alpha\beta}=\frac{1}{2}\mathcal{L}_{X}g_{\alpha\beta}+\mathcal{L}_{X}(u_{\alpha}u_{\beta}),
	\end{split}
\end{equation} which can be expressed as
\begin{equation}\label{Phiabex}
	\varPhi_{\alpha\beta}=\frac{1}{2}(\nabla_{\alpha}X_{\beta}+\nabla_{\beta}X_{\alpha})+u^{\lambda}(u_{\alpha}\nabla_{\beta}X_{\lambda}+u_{\beta}\nabla_{\alpha}X_{\lambda})
\end{equation}in an affine parameterization. The vectors \textbf{u} and \textbf{X} in $\varPhi_{\alpha\beta}$ are directly related to a given Lorentzian metric and are not introduced arbitrarily. $\varPhi_{\alpha\beta}$ is the change of the Riemannian metric $g^{+}_{\alpha\beta}$, that specifically belongs to a particular Lorentzian metric $g_{\alpha\beta}$, along the flow of $\bm{X}$. 
\par $\varPhi_{\alpha\beta}  $ vanishes if and only if $ X^{\mu} $ is a Killing vector. However, in general, there are no Killing vector fields unless a particular symmetry is involved.  
\par $\varPhi_{\alpha\beta}$ is constructed from the Lie derivatives in (\ref{Phiab}), which do not depend on the connection $\nabla$ that defines the covariant derivative on the manifold. It is straightforward to show from the definition of a covariant derivative that the connection coefficients (Christoffel symbols) cancel out of the Lie derivative of any tensor leaving partial derivatives in place of the covariant derivatives. Any covariant expression obtained from the Lie derivative of a tensor is a tensor of the same rank that is equivalently expressed in terms of partial derivatives. The connection coefficients represent the gravitational field in a metric theory of gravity, and they vanish during free fall.  Thus, $\varPhi_{\alpha\beta}$ is invariant during free fall, and there is no conflict with the Einstein equivalence principle. 
\par Equation (\ref{MEQ}) can be obtained dynamically from the action functional $ S=S^{F}+S^{EH}+S^{G} $, which consists of the action for all ordinary matter fields $S^{F}$, the Einstein-Hilbert action of GR, $S^{EH}$, and the action for the energy-momentum of the gravitational field, $S^{G}$:
\begin{equation}\label{S}
	\begin{split}
		S=\int L^{F}( A^{\beta},\nabla^{\alpha} A^{\beta},...,g^{\alpha\beta})\sqrt{-g}d^{4}x
		+\frac{c^{3}}{16\pi G} \int (R-2\Lambda)\sqrt{-g}d^{4}x\\-\frac{c^{3}}{16\pi G}\int \varPhi_{\alpha\beta} g^{\alpha\beta}\sqrt{-g} d^{4}x
	\end{split}
\end{equation}
where $L^{F}  $ is the Lagrangian of the ordinary matter fields $A^{\beta}$. The details of the variations of $S$ with respect to the variables $g^{\alpha\beta}$, $ u^{\nu}$ and $f $ are given in Appendix B. In particular, variation of $S$ with respect to the inverse metric generates (\ref{MEQ}), the term $-X^{\lambda}\nabla_{\lambda}(u^{\alpha}u^{\beta})$, and the constraint
\begin{equation}\label{uab}
	\nabla_{\alpha}(u^{\alpha}u^{\beta})=0
\end{equation}from which the global constraint
\begin{equation}\label{intPhi}
	\int\Phi\sqrt{-g}d^{4}x=\int\nabla_{\alpha}X_{\beta}(g^{\alpha\beta}+2u^{\alpha}u^{\beta})=0
\end{equation} follows after integration by parts. The negative sign in $-X^{\lambda}\nabla_{\lambda}(u^{\alpha}u^{\beta})$ demands that this term must vanish, which is consistent with expressing it in an affine parameterization as in (\ref{Phiabex}).
\par As shown in \cite{ND}, the action $
S^{EHG}=\frac{c^{3}}{16\pi G}\int (R-\Phi)\sqrt{-g}d^{4}x $ that follows from (\ref{intPhi}) generates the modified Einstein equation with no cosmological constant. If $ \Phi$ is set to the constant 2$\Lambda $, the Einstein equation with a cosmological constant is obtained accordingly, which contradicts (\ref{intPhi}). Thus, $ \Phi $ dynamically replaces the cosmological constant and the Einstein equation with a \emph{complete} energy-momentum tensor is
\begin{equation}\label{ME}
	G_{\alpha\beta}=\frac{8\pi G}{c^{4}}T_{\alpha\beta}
\end{equation} with
\begin{equation}\label{T}
	T_{\alpha\beta}=\tilde{T}_{\alpha\beta}-\frac{c^{4}}{8\pi G}\varPhi_{\alpha\beta}. 
\end{equation}
The local conservation law, $\nabla^{\alpha}T_{\alpha\beta}$ = 0, follows from the diffeomorphic invariance of MGR and is consistent with  Einstein's original postulate.
\par Variation of the action functional $S$ with respect to $u^{\nu}$ generates
\begin{equation}\label{u}
	u_{\nu}=\frac{\partial_{\nu}f}{\Phi}
\end{equation}and variation with respect to $f$, the magnitude of $X^{\alpha}$, yields the constraint
\begin{equation}\label{nu}
	\nabla_{\alpha}u^{\alpha}=0.
\end{equation}
\par Exposing the line element vector field from $\varPhi_{\alpha\beta}$ leads to new solutions to the Einstein equation from (\ref{ME}) that depend on the line element covectors in the Lorentzian metric (\ref{gab}). 
With $\textbf{u}$ in an orthonormal basis,
\begin{equation}\label{X0}
	\varPhi_{00}=(1+2u_{0}u^{0})\nabla_{0}X_{0}
\end{equation}and 
\begin{equation}\label{Xi}
	\varPhi_{ii}=(1+2u_{i}u^{i})\nabla_{i}X_{i}\;\;\text{no sum on i},\;\;i=1,2,3.
\end{equation}  The introduction of $\varPhi_{\alpha\beta}$ into GR yields all results of GR plus additional features attributed to dark matter and dark energy. The local gravitational energy-momentum tensor $\varPhi_{\alpha\beta}$ describes the dark Universe geometrically. Its absence in GR is why it has been difficult to understand what dark energy is and why a dark matter profile must be introduced into GR to account for the invisible matter in galaxies known as dark matter. 
\subsubsection{Testing MGR against astronomical data}
MGR has been tested against existing data in \cite{ND}. For example:
\begin{itemize}
	\item the power four Tully-Fisher relation follows naturally from the solution of (\ref{ME}) in a spherically symmetric spacetime free of ordinary matter;
	\item  the calculation of the total advance of the perihelion of Mercury due to GR, dark matter from the line element vectors, and the quadrupole moment of the Sun yielded 43.012 arcsec/century, which compared excellently to the measured result of $43.0115\pm0.0085$ arcsec/century \cite{Pir};
	\item the strong lensing from galactic cluster SDSS J0900+2234 with an Einstein ring calculated in MGR yields 13.06 arcsec that compares well to the $10.5\pm 3.6$ arcsec value \cite{Sloan} calculated in GR with the NFW dark matter profile;
	\item  the Shapiro time delay for the Crab pulsar calculated in MGR is 7.63 days. This result can be compared to three calculations of the Shapiro time delay for the Crab pulsar quoted in \cite{Desai} as (5.14-15.42) d, 1.98 d and 3.84 d. The MGR result is larger than the last two noted results that are determined in GR with the NFW dark matter profile, commensurate with the fact that the NFW dark matter profile is not valid in the interval [5,30] kpc from the galactic center of the Milky Way;
	\item As first discussed in \cite{ND} and noted in Section 3, MGR predicts the time dependence of dark energy that has been confirmed to $\sim4\sigma $.
	\item  Gravitational lensing of the Bullet Cluster was calculated from MGR based on the mass estimates in \cite{Para}. The main cluster generated 67.04 arcsec with the GR term contributing 7.65 arcsec to that result because GR itself only describes ordinary matter. The smaller Bullet cluster lenses 50.71 arcsec with a contribution of 6.58 arcsec from the GR term. Thus, ordinary matter accounts for 11.4\% and 13.0\% of the total gravitating mass of the main and Bullet clusters, respectively, as compared to 15.6\% for each cluster according to the $\Lambda$CDM model.
\end{itemize}
\subsubsection{The geometrical nature of \texorpdfstring{$\varPhi_{\alpha\beta}$}{varPhialphabeta} from the Lie derivative}
It is now shown from the Lie derivative how dark matter changes, independently of ordinary matter, along the flow of the line element vector of the corresponding Lorentzian metric. This is the geometrical nature of dark matter, which nicely describes how the dark matter of the smaller Bullet cluster flows unimpeded along its line element vector through the gas of the main cluster. After the collision of the clusters, we see the hot plasma from the heated gas of the clusters surrounded by the dark matter haloes of each cluster in the merged Bullet Cluster.
\par  Any action functional with a scalar Lagrangian constructed from the tensor fields of the Lagrangian is invariant under the group Diff($\mathcal{M}$) of diffeomorphisms. Since $\varPhi_{\alpha\beta}$ is obtained from $S^{G}$, the Lie derivative of the Lorentzian metric and its associated unit covectors have a geometrical interpretation in terms of a family of diffeomorphisms. Given a diffeomorphism $ \phi: \mathcal{M}\longrightarrow \mathcal{M} $, the Lie derivative of the metric is constructed from the pullback $\phi_{t\ast}$ of the metric under Diff($\mathcal{M}$):
\begin{equation}\label{Lie}
	\mathcal{L}_{X} g_{\alpha\beta}= \lim_{t \to 0} \{\frac{\phi_{t\star}[g_{\alpha\beta}(\phi_{t}(p))]-g_{\alpha\beta}(p)}{t}\}.
\end{equation} $\phi_{t} $ is the flow down the integral curves defined as those curves $x^{\beta}(t)$, which solve $ X^{\beta}=\dfrac{dx^{\beta}}{dt} $ for the line element vector field $X^{\beta}$ where $ t $ defines a one-parameter family of diffeomorphisms. The Lie derivative of the metric tells us how fast the metric changes as it moves along the integral curves. Similarly for the $u_{\alpha}u_{\beta}$ part of $\varPhi_{\alpha\beta}$. From (\ref{Phiab}), $\varPhi_{ij}=\frac{1}{2}\mathcal{L}_{X}g_{ij}+\mathcal{L}_{X}(u_{i}u_{j})$, so dark matter is determined from the rate of change of the stress components of the metric and its corresponding unit covectors along the flow of its line element vector $\bm{X}$. $\varPhi_{ij}$ depends only on the line element covectors and vectors in (\ref{Phiabex}); dark matter is independent of ordinary matter and is intrinsically related to a particular Lorentzian spacetime; similarly for dark energy and $\varPhi_{00}$.
\subsubsection{Dark energy and dark matter}
A spatially maximally symmetric (homogeneous and isotropic) spacetime is described by the Friedmann-Lema\^{i}tre-Robertson-Walker (FLRW) metric
\begin{equation}\label{FLRW}
	ds^{2}=-c^{2}dt^{2}+a(t)^{2}[\frac{1}{1-k r^{2}}dr^{2}+r^{2}(d\theta^{2}+{sin^{2}\theta} d\varphi^{2})]
\end{equation} where $a(t)$ is the dimensionless cosmological scale factor, which satisfies $a>0 $ after the Big Bang at $ t=0 $.  The parameter $k$ of dimension $L^{-2}$ is used to describe a particular spatial geometry; $k=1,0,-1$ describes a closed, flat, or open space with constant curvature, respectively, and $r$ is the comoving radius of dimension $L^{1}$. \par A spatially maximally symmetric second rank (0,2) tensor $B_{\alpha\beta}$ has the components 
\begin{equation}\label{Maxsym}
B_{00}=\alpha(t),\enspace B_{0j}=0,\enspace B_{ij}=\beta(t)g_{ij}\;\;i,j=1,2,3
\end{equation} where $\alpha(t)$ and $\beta(t)$ are arbitrary functions of time. $\varPhi_{\alpha\beta}$ cannot be expressed as a spatially maximally symmetric tensor because that symmetry would cause it to vanish. However, we can still explore it in the FLRW metric. The maximal spatial symmetry requires $R_{1}^{1}=R_{2}^{2}=R_{3}^{3}$, which demands $\varPhi^{1}_{1}=\varPhi^{2}_{2}=\varPhi^{3}_{3}$ from (\ref{Rab}). $\varPhi_{00}=\varPhi_{00}(t)$ is consistent with (\ref{Maxsym}) and with (\ref{1Phi00}) restricted to a function of time. Ordinary matter is described in the usual spatially maximally symmetric form with $\tilde{T}_{00}=\varrho c^{2}$, $\tilde{T}_{ij}=pg_{ij}$, and $\tilde{T}=-\varrho c^{2}+3p$ where $\varrho$ is its mass density and $p$ is its pressure. This allows for a direct comparison of the field equations to the usual Friedmann equations.
\par  To obtain the field equations, we use the trace of the Einstein equation with the complete energy-momentum tensor $-\frac{8\pi G}{c^{4}}\tilde{T}-R+\Phi=0$ to rewrite it as
\begin{equation}\label{Rab}
R_{\alpha\beta}=\frac{8\pi G}{c^{4}} (\tilde{T}_{\alpha\beta}-\frac{1}{2}g_{\alpha\beta}\tilde{T})+\frac{1}{2}g_{\alpha\beta}\Phi -\varPhi_{\alpha\beta}
\end{equation} from which we obtain
\begin{equation}\label{F1}
\frac{\ddot{a}}{a}=-\frac{4\pi G}{3}(\varrho+\frac{3p}{c^{2}})+\frac{c^{2}}{6}(2\varPhi_{00}+\Phi)
\end{equation} from the $R_{00}$ component. The $R_{11}$ component gives
\begin{equation}\label{frw2}
\frac{\ddot{a}}{a}+\frac{2\dot{a}^{2}}{a^{2}}+\frac{2kc^{2}}{a^{2}}=4\pi G(\varrho-\frac{p}{c^{2}})+\frac{c^{2}}{2}(\Phi-2\varPhi_{1}^{1})
\end{equation} and the local conservation law $\nabla_{\alpha}T^{\alpha\beta}=0$  yields the conservation equation 
\begin{equation}\label{CE}
\dot{\varrho}-\frac{c^{2}}{8\pi G}\dot{\varPhi_{00}}=-\frac{3\dot{a}}{a}(\varrho+\frac{p}{c^{2}})+\frac{3c^{2}\dot{a}}{8\pi Ga}(\varPhi_{00}+\varPhi_{1}^{1}).
\end{equation}
Inserting (\ref{F1}) into (\ref{frw2}) produces a simpler equation
\begin{equation}\label{F2}
(\frac{\dot{a}}{a})^{2}=\frac{8\pi G\varrho}{3}-\frac{c^{2}\varPhi_{00}}{3}-\frac{kc^{2}}{a^{2}}. 
\end{equation} 
Equations (\ref{F1}) and (\ref{F2}) are the modified Friedmann equations in the FLRW metric in cosmic time.
\par In the FLRW metric, $R_{ij}=(\frac{\ddot{a}}{ac^{2}}+\frac{2\dot{a}^{2}}{a^{2}c^{2}}+\frac{2\kappa }{a^{2}})g_{ij}$, $R_{1}^{1}=R_{2}^{2}=R_{3}^{3}$ and it follows from (\ref{Rab}) that
\begin{equation}\label{var123}
	\varPhi_{1}^{1}=\varPhi_{2}^{2}=\varPhi_{3}^{3}.
\end{equation}
\par With the Hubble parameter defined as $ H:=\frac{\dot{a}}{a} $, 
\begin{equation}\label{dH}
\frac{\ddot{a}}{a}=\dot{H}+H^{2}, 	
\end{equation} 
and with (\ref{F1}), (\ref{CE}), and (\ref{F2})
\begin{equation}\label{dPhi00}
\frac{d\varPhi_{00}}{da}=\frac{8\pi G}{c^{2}}\frac{d\varrho}{da}+\frac{6\kappa}{a^{3}}-\frac{6\dot{H}}{ac^{2}},
\end{equation} which has the solution
\begin{equation}\label{1Phi00}
\begin{split}
	\varPhi_{00}=\Lambda+\frac{8\pi G\varrho}{c^{2}}-\frac{3\kappa}{a^{2}}-\frac{6}{c^{2}}\int \frac{\dot{H}}{a}da
\end{split}	
\end{equation}for the local gravitational energy. $\Lambda$ is an integration constant that is (or is proportional to) the cosmological constant, which appears immediately after the Big Bang in the primordial epoch when there is no ordinary matter.
\par Dark energy is believed to be responsible for the observed accelerated expansion of the current epoch of the Universe. If $2\varPhi_{00}+\Phi$ is positive and satisfies $2\varPhi_{00}+\Phi>\frac{8\pi G}{c^{2}}(\varrho+\frac{3p}{c^{2}})$ in (\ref{F1}), the Universe accelerates. Since $2\varPhi_{00}+\Phi=\varPhi_{00}+\varPhi_{i}^{i} $, it represents the natural splitting of $\varPhi_{\alpha\beta}$ into its energy and the spatial trace of its stress components, with the understanding that $\frac{c^{4}\varPhi_{\alpha\beta}}{8\pi G}$ is an energy-momentum density. This suggests dark energy can be described by a positive $\varPhi_{00}$ and dark matter by $\varPhi_{1}^{1}$ because the stress components $\varPhi_{ij} $ in the FLRW metric are related to its spatial trace by $\varPhi_{i}^{i}=g^{ij}\varPhi_{ij}=3\varPhi_{1}^{1} $. Then, if the dark energy is large enough relative to the dark matter and the ordinary matter and pressure in a particular epoch of the Universe, it will cause that epoch to accelerate. Thus, the local gravitational energy-momentum tensor $\varPhi_{\alpha\beta}$ provides an explanation of dark energy and dark matter; dark energy is the local gravitational energy $\varPhi_{00}$ and dark matter is described by $\varPhi_{ij}$.  
\subsubsection{Uniqueness relative to all metric-compatible theories of gravity}
As discussed in \cite{NP}, there are many metric compatible modifications of GR obtained by adding a symmetric tensor $h_{\alpha\beta}$ to the ordinary matter tensor so that the Einstein equation holds: 
\begin{equation}\label{GMR}
	G_{\alpha\beta}+\Lambda g_{\alpha\beta}=k(\tilde{T}_{\alpha\beta}+h_{\alpha\beta}) 	
\end{equation}
where $h_{\alpha\beta}$ consists of a linear sum of divergenceless non-Lovelock tensors and the Lovelock tensors in a spacetime of greater than four dimensions. For example, the non-trivial Codazzi tensors $\xi_{\beta\lambda} $ satisfy  $\nabla_{\alpha}\xi_{\beta\lambda}=\nabla_{\beta}\xi_{\alpha\lambda}$. The tensor $H_{\alpha\beta}=\xi_{\alpha\beta}-g_{\alpha\beta}\xi_{\mu}^{\mu}$ with $\nabla^{\alpha}\xi_{\alpha\beta}=g_{\alpha\beta}\nabla^{\alpha}\xi_{\mu}^{\mu} $ can be constructed from the Codazzi tensors. In GR, $\tilde{T}_{\alpha\beta}$ is divergenceless and that property of $H_{\alpha\beta} $ ensures $\nabla^{\alpha}(\tilde{T}_{\alpha\beta}+H_{\alpha\beta})=0.$ Many theories of gravity can be described from a suitable choice of the Codazzi tensors \cite{Mant} in the FLRW metric, including $f(R)$, Gauss-Bonnet $f(G)$, teleparallel $f(T)$, Lovelock gravity, Einsteinian cubic $f(P)$, and Conformal Killing gravity (CKT). Horndeski and tensor-vector-scalar modifications of GR have the same form as (\ref{GMR}) but with additional equations of motion for the added tensor, vector and scalar fields.\par However, (\ref{GMR}) does not contain $\varPhi_{\alpha\beta}$. The ODT guarantees that it is the \emph{only} tensor orthogonal to \emph{all} symmetric divergenceless tensors with $\tilde{T}_{\alpha\beta}\neq0$, including $h_{\alpha\beta}$ and the Lovelock tensors. No divergenceless non-Lovelock tensor or higher dimensional Lovelock tensor is connection-independent. That property of $\varPhi_{\alpha\beta}$ is unique relative to all other modifications of GR with a metric compatible connection, which sets MGR apart from competing theories of gravity.
 \subsubsection{The link between gravity and quantum theory}
MGR links gravity to quantum theory \cite{NP}. The geometrical unification of gravity described by MGR and quantum field theory (QFT) was established by locally embedding a background-independent four-dimensional Lorentzian manifold where gravity lives with all known elementary particles into a flat ten-dimensional Minkowskian manifold where QFT resides.  Dark matter can be described geometrically and as a spin-1 particle in quantum theory by the line element covector $X_{\beta}$, which satisfies both the Proca field equation with a current $J_{\beta}$, $\nabla^{\alpha}K_{\alpha\beta}=k^{2}X_{\beta}+J_{\beta}$, and the spin-1 wave equation $\square X_{\beta}=k^{2}X_{\beta}$ 
where $\square:=\nabla_{\alpha}\nabla^{\alpha}$ and $K_{\alpha\beta}=\nabla_{\alpha}X_{\beta}-\nabla_{\beta}X_{\alpha}$. The Lie derivative of the Lorentzian metric is the symmetric part of the symmetrized wave equation that links gravity and quantum theory for a particular Lorentzian metric: $\square X_{\beta}=\frac{1}{2}\nabla^{\alpha}(\mathcal{L}_{X}g_{\alpha\beta}+K_{\alpha\beta})$ and $\mathcal{L}_{X}g_{\alpha\beta} $ in the spin-1 wave equation is identical to that in $\varPhi_{\alpha\beta}$ for a given Lorentzian metric. Since $X_{\beta}$ intrinsically belongs to a Lorentzian metric, it is not surprising that a dark matter particle characterized by $X^{\beta}$ is yet to be discovered. A Lorentzian spacetime contains a myriad of line element vectors with quantum field properties, but that does not guarantee detectable dark matter particles exist; particle-like excitations in spacetime itself are unlikely to be detected unless some extraordinary event locally creates them large and long enough to be observed. Nevertheless, dark matter has a fundamental geometrical description that does not depend on the existence of any particle whatsoever.
\subsection*{Summary}
Local gravitational energy-momentum described by $\varPhi_{\alpha\beta}$ explains the nature of dark energy and dark matter. The energy component $\varPhi_{00}$ is dark energy, and the stress components $\varPhi_{ij}\:\:i,j=1,2,3$ describe dark matter. The line element vector field exists at all points in spacetime, which implies $\varPhi_{\alpha\beta}$ and therefore dark energy and dark matter exist everywhere and are spacetime dependent. Thus, their gravitational effects are spacetime dependent. It will be shown that MGR unlocks the quandary of dark matter between the standard $\Lambda$CDM model and MOND; it is a covariant theory that reproduces all the successes of General Relativity as well as MOND in the appropriate equilibrium conditions. Although dark matter is described as a spin-1 particle that may never actually be observed, the geometry of spacetime rules gravity, and it will be shown that MGR encompasses MOND.
\par This paper now proceeds as follows: In Section 2, the components of $\varPhi_{\alpha\beta}$ are explicitly presented and an updated solution to the Einstein equation in a spherically symmetric spacetime is presented. The relationship of the parameters involved is established from the gravitational energy density. The velocity rotation curves involving dark energy and dark matter, the RAR, and the baryonic and power-four Tully-Fisher relations are derived from the modified Newtonian force of MGR in Section 3. The relation of MGR to and issues with MOND, and extended flat rotation curves are discussed. Wide binaries are investigated in Section 4, and the behavior of gravity in MGR at large scales is discussed in Section 5.
\section{The spherical symmetric solution to the Einstein equation in the vacuum}
In a region of spacetime free of all ordinary matter, $\tilde{T}_{\alpha\beta}=0$ and 
\begin{equation}\label{EF}
	G_{\alpha\beta}+\varPhi_{\alpha\beta}=0.
\end{equation} An updated spherically symmetric solution to (\ref{EF}) is sought with the metric 
\begin{equation}
		ds^{2}=-e^{\nu} c^{2}dt^{2}+e^{\lambda}dr^{2}+r^{2}(d\theta^{2}+sin^{2}\theta d\varphi^{2})
\end{equation} where $ \nu $ and $ \lambda $ are functions of $r$ and $t$. Since the metric is spherically symmetric, there exist three Killing vectors associated with the spherical symmetry of a three-dimensional spatial rotation. $ \varPhi_{\alpha\beta} $ vanishes when $ X^{\beta} $ is a Killing vector and (\ref{EF}) reduces to $ R_{\alpha\beta}=0 $. It follows that $ \nu=-\lambda $ holds in MGR as a direct result of the spherical symmetry invoked to solve (\ref{EF}). 
 \par  Static solutions to (\ref{EF}) requires $ \partial_{0}X_{\alpha}=0 $ and from the metric $ \partial_{0}\lambda=0$. It follows that $ u_{\alpha}\partial_{0}f+f\partial_{0}u_{\alpha}=0 $ and 
 \begin{equation}\label{uu1}
 	u_{\alpha}u^{\alpha}=-1
 \end{equation}
  holds. 
  \par With $X_{3}=0$ and $\mu:=(1+2u_{0}u^{0})$, the components of $ \varPhi_{\alpha\beta} $ to be considered are then: 
  \begin{equation}\label{Phi00}
  	\varPhi_{00}=\frac{\mu}{2}e^{-2\lambda}\lambda^{\prime} X_{1}
  \end{equation}
  \begin{equation}
  	\varPhi_{11}=(1+2u_{1}u^{1} )({X_{1}}^{\prime}-\frac{1}{2}\lambda^{\prime}X_{1}),
  \end{equation} 
  \begin{equation}
  	\varPhi_{22}=(1+2u_{2}u^{2} )(\partial_{2}X_{2}+re^{-\lambda} X_{1}),  
  \end{equation} 
  \begin{equation}
  	\varPhi_{33}=r \sin^{2}\theta e^{-\lambda}X_{1}+\sin\theta \cos\theta X_{2},
  \end{equation} the Ricci scalar, which from (\ref{EF}) equals $ \Phi $, is
  \begin{equation}\label{R}
  	\begin{split}
  		R=e^{-\lambda}(\lambda^{\prime\prime}-{\lambda^{\prime}}^{2}+\frac{4}{r}\lambda^{\prime}-\frac{2}{r^{2}})+\frac{2}{r^{2}},
  	\end{split}
  \end{equation}
  and the corresponding components of the Einstein tensor are:
  \begin{equation}
  	\begin{split}
  		G_{00}=\frac{1}{r^{2}}e^{-2\lambda}(r\lambda^{\prime}-1)+\frac{e^{-\lambda}}{r^{2}},
  	\end{split}
  \end{equation}
  \begin{equation}
  	\begin{split}
  		G_{11}=\frac{1}{r^{2}}(1-r\lambda^{\prime})-\frac{e^{\lambda}}{r^{2}},
  	\end{split}
  \end{equation}
  \begin{equation}
  	\begin{split}
  		G_{22}=\frac{r^{2}e^{-\lambda}}{2}(-\lambda^{\prime\prime}+{\lambda^{\prime}}^{2}-\frac{2\lambda^{\prime}}{r}),
  	\end{split}
  \end{equation}
  \begin{equation}
  	\begin{split}
  		G_{33}={\sin\theta}^2[\frac{r^{2}e^{-\lambda}}{2}(-\lambda^{\prime \prime}+{\lambda^{\prime}}^{2}-\frac{2\lambda^{\prime}}{r})]
  	\end{split}
  \end{equation} where the prime denotes $ \partial_{1} $.  \par Since $ e^{2\lambda}(\varPhi_{00}+G_{00})+\varPhi_{11}+G_{11}=0 $ from (\ref{EF}),
  \begin{equation}\label{phiG0011}
  	\begin{split}
  		\mu\frac{\lambda^{\prime}}{2}X_{1}+(X_{1}^{\prime}-\frac{\lambda^{\prime}}{2}X_{1})(1+2u_{1}u^{1})=0.
  	\end{split}
  \end{equation} 
  \par  $G_{22}+\varPhi_{22}=0 $ gives \begin{equation}\label{phiG22}
  	\begin{split}
  		-\lambda^{\prime\prime}+\lambda^{\prime2}-\frac{2}{r}\lambda^{\prime}+\frac{2e^{\lambda}}{r^{2}}(\partial_{2}X_{2}+re^{-\lambda}X_{1})(1+2u_{2}u^{2})=0
  	\end{split}
  \end{equation}
  
  and $G_{33}+\varPhi_{33}=0 $ in the interval $ 0<\theta<\pi $ yields
  \begin{equation}\label{phiG33}
  	\begin{split}
  		-\lambda^{\prime\prime}+\lambda^{\prime 2}-\frac{2}{r}\lambda^{\prime}+\frac{2e^{\lambda}}{r^{2}}(re^{-\lambda}X_{1}+X_{2}\cot\theta)=0.
  	\end{split}
  \end{equation} Subtracting (\ref{phiG22}) from (\ref{phiG33}) requires
  \begin{equation}\label{C2}
  	X_{2}\cot\theta -\partial_{2}X_{2}-2u_{2}u^{2}(\partial_{2}X_{2}+re^{-\lambda}X_{1})=0.
  \end{equation} 
  \par Using (\ref{uu1}), an equation involving the two independent components $ X_{1} $ and $ X_{2} $ can be obtained from (\ref{phiG0011}) and (\ref{C2}):
  \begin{equation}\label{C}
  	\begin{split}
  		X_2\cot\theta-\partial_{2}X_{2}+\mu(1+\frac{\lambda^{\prime}X_{1}}{\lambda^{\prime}X_{1}-2X_{1}^{\prime}})(\partial_{2}X_{2}+re^{-\lambda}X_{1})=0.
  	\end{split}
  \end{equation} \par An expression for $ X_{1}$ of the form
  \begin{equation}\label{X1}
  	X_{1}=e^{\lambda}P 
  \end{equation}
  is chosen where $ P $ is a polynomial in $ r $. Equation (\ref{C}), with $ X_{1} $ given by (\ref{X1}), then becomes 
  \begin{equation}\label{N}
  	\partial_{2}X_{2}-m(r)X_{2}\cot\theta=n(r)
  \end{equation}where $ n(r):=\frac{2\mu rPP^{\prime}}{\lambda^{\prime}P+2P^{\prime}(1-\mu)}$ and $m(r)=\frac{\lambda^{\prime}P+2P^{\prime}}{\lambda^{\prime}P+2P^{\prime}(1-\mu)}$, which has the solution 
  \begin{equation}
  	\begin{split}
  		X_{2}=\frac{n}{1-m}\sin\theta\;_{2}F_{1}(\frac{1}{2},\frac{1-m}{2};\frac{3-m}{2};\sin^{2}\theta)+c_{10}\sin^{m}(\theta).
  	\end{split}
  \end{equation}$_{2}F_{1}(\alpha,\beta;\gamma;z)  $ is the Gaussian hypergeometric function with $ \alpha=\frac{1}{2} $, $ \beta=\frac{1-m}{2} $, $ \gamma=\frac{3-m}{2} $ and $ z=\sin^{2}\theta $. Setting the arbitrary constant $ c_{10} $ to zero and using the Euler transformation $_{2}F_{1}(\alpha,\beta;\gamma;z)=(1-z)^{-\alpha}\;  _{2}F_{1}(\alpha,\gamma-\beta;\gamma;\frac{z}{z-1})   $ for $0<\theta<\frac{\pi}{4}  $ yields
  \begin{equation}\label{Fabc}
  	\begin{split}
  		X_{2}&=\frac{n}{1-m}\tan\theta\;_{2}F_{1}(\frac{1}{2},1;\frac{3-m}{2};-\tan^{2}\theta)\\
  		&=-rP\tan\theta(1+\Sigma_{n=1}^{\infty}\frac{(\alpha)_{n}(\beta)_{n}(-\tan^{2}\theta)^{n}}{(\gamma)_{n}n!})\\
  		&=-rP\tan\theta(1+\frac{1}{3-m}(-\tan^{2}\theta) +O(\tan^{4}\theta) )
  	\end{split}
  \end{equation}where $ \alpha=\frac{1}{2} $, $ \beta=1 $, $ \gamma=\frac{3-m}{2} $ and $ (\alpha)_{n}=\alpha(\alpha+1)...(\alpha+n-1)\enspace n>0 $ is the Pochhammer symbol. Using the first term of (\ref{Fabc}) as a solution for $ X_{2} $ in (\ref{phiG33}) gives the Schwarzschild solution. 
  \par  The polynomial for $P $ is assumed to be 
  \begin{equation}\label{P}
  	P=a_{0}+\frac{a_{1}}{r}+\frac{a_{2}}{r^{2}}
  \end{equation} where the $ a_{i}\;(i=0,1,2)$ are real arbitrary parameters. From (\ref{Fabc}), an expression for $X_{2}$ of the form
  \begin{equation}\label{X2B}
  	X_{2}=-\tan\theta(P+Q\theta^{2})r
  \end{equation}is sought where $Q(r)$ is the polynomial
  \begin{equation}\label{Q}
  	Q(r)=B_{2}+\frac{B_{0}}{r}+\frac{B_{1}}{r^{2}}
  \end{equation} with real parameters $B_{i}$, and $\theta>0$ is infinitesimally small so that $\tan\theta$ is represented by $\theta$ evaluated at $\theta=\theta_{0}$, the infinitesimal value of $\theta>0$. The parameters $a_{i}$ are independent of all $B_{i}$. A consistency check of (\ref{N}) with (\ref{X2B}) demands $-rP(1-m)+r\theta_{0}^{2}(mQ-3Q-P)=n$, which holds to the infinitesimal error $(-3Q-P+mQ)\theta_{0}^{2}$ because $-rP(1-m)=n$. Expressing $X_{2}$ in polynomial form to order $\theta^{2}$ gives
  \begin{equation}
  	X_{2}=-\tan\theta_{0}((a_{0}+B_{2}\theta_{0}^{2})r+a_{1}+B_{0}\theta_{0}^{2}+\frac{a_{2}+B_{1}\theta_{0}^{2}}{r})
  \end{equation}
  \par  The line element covectors are bounded, so the first term in $ X_{2} $ must vanish, which requires $B_{2}=-\frac{a_{0}}{\theta_{0}^{2}}$ . The physically relevant solution for $ X_{2} $ is then
  \begin{equation}\label{X2}
  	X_{2}=-\tan\theta(b_{0}+\frac{b_{1}}{r})\mid_{\theta_{0}}
  \end{equation} where $b_{0}:=a_{1}+B_{0}\theta^{2}_{0}$ and $b_{1}:=a_{2}+B_{1}\theta^{2}_{0}$.
  \par Equation $(\ref{phiG33})$ can now be written as 
  \begin{equation}\label{phiG33a}
  	\begin{split}
  		-\lambda^{\prime\prime}+\lambda^{\prime 2}-\frac{2}{r}\lambda^{\prime}+\frac{2e^{\lambda}}{r^{2}}(a_{0}r+a_{1}+\frac{a_{2}}{r}-b_{0}-\frac{b_{1}}{r})=0.
  	\end{split}
  \end{equation}
  By demanding 
  \begin{equation}\label{b1}
  	b_{1}=a_{2}, 
  \end{equation} the nonphysical term $ \frac{\ln r}{r} $ in the solution for $ \lambda $ can be eliminated. All polynomial expressions for $P$ and $Q$  with $\frac{1}{r^{2}}$ or higher inverse powers of $r$ generate nonphysical terms in $\lambda$, which limits the form of the polynomials to those in (\ref{P}) and (\ref{Q}). Equation (\ref{phiG33a}) then simplifies to
  \begin{equation}\label{main}
  	\begin{split}
  		-\lambda^{\prime\prime}+\lambda^{\prime 2}-\frac{2}{r}\lambda^{\prime}+\frac{2e^{\lambda}}{r^{2}}(a_{0}r-b)=0
  	\end{split}
  \end{equation}where
  \begin{equation}
  	-b=a_{1}-b_{0},
  \end{equation} which has the exact solution
  \begin{equation}\label{wow}
  	\begin{split}
  		\lambda=-\ln(-a_{0}r+2b\ln r+\frac{c_{1}}{r}+c_{2}), \enspace 0<r<\infty.
  	\end{split}
  \end{equation}The arbitrary parameters $c_{1}$ and $ c_{2} $ are those in the Schwarzschild solution: $c_{1}=-\frac{2GM}{c^{2}}$ and $c_{2}=1$. Equation (\ref{wow}) represents the extended Schwarzschild solution. All solutions to the Einstein equation in MGR are bounded, so (\ref{wow}) demands $ r $ to be positive and finite; $ r>0 $ can be as large as necessary to describe any physically reasonable Universe or part thereof, but it cannot extend to infinity. It is clear from (\ref{Newt}) that the parameters $a_{0}$ and $\mid b\mid$ are related to dark energy and dark matter, respectively. 
 
 \subsection{The gravitational energy density}
\par The relationship between the parameters can be determined from the local energy density of the gravitational field, which follows from (\ref{T}):
\begin{equation}\label{W}
	E=-\frac{c^{4}}{8\pi G}\varPhi_{00}.
\end{equation}  From $\varPhi_{00}=\frac{\mu}{2}X_{1}e^{-2\lambda}\lambda^{\prime}$, (\ref{X1}), (\ref{P}), and (\ref{wow}), 
\begin{equation}
\varPhi_{00}=\frac{\mu}{2}(a_{0}-\frac{2b}{r}+\frac{c_{1}}{r^{2}})(a_{0}+\frac{a_{1}}{r}+\frac{a_{2}}{r^{2}}),
\end{equation}which must yield the Newtonian, dark energy, and dark matter local gravitational energy densities. By setting the coefficients of the $ \frac{1}{r}$ and $\frac{1}{r^{3}} $ terms to zero, we obtain 
\begin{equation}\label{a1}
2b=a_{1} 	
\end{equation}
and $ 2ba_{2}=c_{1}a_{1} $, respectively, from which
\begin{equation}\label{a2}
a_{2}=c_{1}. 
\end{equation} The $\frac{1}{r^{4}}$ term exhibits the Newtonian behavior, so a contribution from dark matter must come from the $\frac{1}{r^{2}}$ term, which vanishes if $a_{1}^{2}=2a_{0}\mid c_{1}\mid$. Thus, to prevent that, we set 
\begin{equation}\label{a0c1}
2a_{0}\gamma\mid c_{1}\mid=a_{1}^{2} 
\end{equation}where $\gamma>1$ is a parameter that relates dark matter $M_{DM}$ to ordinary (baryonic) matter $M$ by the ratio 
\begin{equation}\label{gamma}
M_{DM}=\gamma M.
\end{equation} The mean ratio of dark matter to ordinary matter in the Universe is approximately five, which is consistent with $\gamma>1$; the amount of dark matter is always greater than that of ordinary matter. This requires dark matter to exist within the space occupied by ordinary matter, which demands that dark and ordinary matter do not interact, as advocated by the $\Lambda$CDM standard model. $\gamma$ varies widely depending on the dark matter content of the cosmological entities; ultrafaint dwarf spheroidals are heavily dark matter-dominated \cite{Lelli} and $\gamma$ varies between 10-1000, whereas Section 4 shows that $\gamma$ slightly above one explains the wide binaries anomaly.  Since the contained mass at a particular radius $r$ in a galaxy or galactic cluster describes their rotation curves, (\ref{gamma}) is more generally stated by $M(r)_{DM}=\gamma(r) M(r) $ where  $\gamma(r)$ locally defines the ratio of the dark matter to baryonic matter.
\par  From (\ref{a1}) and (\ref{a0c1}),
\begin{equation}\label{b}
b=\pm\sqrt{\frac{a_{0}\mid c_{1}\mid\gamma}{2}}.
\end{equation} Hence,
\begin{equation}\label{gen}
\varPhi_{00}=\mu(\frac{1}{2}a_{0}^{2}+\frac{4b^{2}(\frac{1}{\gamma}-1)}{r^{2}}+\frac{c_{1}^{2}}{2r^{4}})
\end{equation} and it follows that the local energy density of the static spherically symmetric gravitational field is
\begin{equation}\label{W0}
E=-\frac{\mu c^{4}}{16\pi G}a_0^{2}+\frac{\mu c^{2}a_{0}M(\gamma-1)}{2\pi r^{2}}-\frac{\mu GM^{2}}{4\pi r^{4}}\;\;\gamma>1.
\end{equation}The first term is the local gravitational energy density of dark energy, and the second is that from dark matter. The third term is the Newtonian gravitational energy density \cite{LL} if $\mu=\frac{1}{2}$.

\subsubsection{The parameter \texorpdfstring{$a_{0}$}{a0}}
Since $c_{1}=-\frac{2GM}{c^{2}}$, $a_{0}$ is the only independent parameter in the extended Schwarzschild solution because $b$ is determined from it. It is directly related to dark energy in the spherically symmetric solution of the Einstein equation. The line element covector field $X_{\beta}$ exists everywhere, so $a_{0}$ in the radial component $X_{1}$ given by (\ref{X1}) and (\ref{P}) must have a value for each $X_{1}$ in spacetime. The parameter $a_{0}$ is generally determined when the dark energy acceleration equals the Newtonian acceleration by (\ref{NDE}), or when the dark matter acceleration equals the dark energy acceleration in (\ref{inrot}). However, there are situations when there are no such equilibriums and a scaling relation for $a_{0}$ must be invoked, such as $a_{0}=\zeta l_{p}\frac{M}{M_{\odot}}$  as developed in \cite{ND} where $\zeta=3.55\times10^{-6}$ is a parameter of dimension $L^{-2}$ determined from the dark matter halo of galaxy NGC3198, $l_{p}$ is the Planck length, and $M_{\odot}$ is the ordinary mass of the Sun. This scaling relation works well in the planetary part of the Solar system and in some galactic clusters such as SDSS J0900+2234, but must not be used when $a_{0}$ can be calculated from the stated equilibrium conditions.

\section{The radial acceleration and Tully-Fisher relations, and MOND}
The radial gravitational force on a test mass $m$ can now be calculated from (\ref{wow}). Using the weak-field relationship of the Newtonian potential $ \phi $ to $ g_{00} $,
\begin{equation}\label{phi}
	\phi=\frac{c^{2}}{2}(e^{-\lambda}-1),	
\end{equation} the modified radial Newtonian force on $m$ is
\begin{equation}\label{Newt}
	F_{r}=-\frac{GMm}{r^{2}}-\frac{\mid b\mid mc^{2}}{r}+\frac{a_{0}mc^{2}}{2}
\end{equation} where $M$ represents the total baryonic (ordinary) mass of the cosmological structure.\par The correction terms to the Newtonian force come from the non-zero components of the line element field in the energy-momentum tensor $ \varPhi_{\alpha\beta}$. The second term is gravitationally attractive and represents the correction from invisible mass. That term provides the additional gravitational attraction that is missing in GR. Since the parameter $ b $ has two signs, its absolute value is invoked to ensure that term is negative. \par The third term is positive and repulsive with $ a_{0}>0 $ as defined. This describes the repulsive dark energy force in the present accelerating epoch. However, a decelerating epoch previous to the current accelerating epoch has been observed by Riess et al. \cite{Riess}. They used the Hubble telescope to provide the first conclusive evidence for cosmic deceleration that preceded the current epoch of cosmic acceleration. Equation (\ref{Newt}) describes the competition between the dark matter and dark energy forces relative to the Newtonian force for a decelerating or accelerating Universe.
\par The weak field force equation (\ref{Newt}) is generally used under equilibrium conditions between the dark energy and Newtonian forces, or the equilibrium between the dark energy and dark matter forces.  Both situations show $a_{0}\propto \frac{GM}{r^{2}} $, and $b\propto\frac{\sqrt{GM}}{r}$. The weak field approximation remains small as $r$ increases for any cosmic application. The metric parameter
$\lambda$ is bounded for any finite $r>0$. Clearly, the log term in $\lambda$ does not appear in the gravitational energy density, the modified force equation (\ref{Newt}), or the rotation velocity (\ref{V}). Thus, no issues from it arise from subsequent use of these expressions.
\par Assuming a circular orbit of a star in a galaxy with a symmetric gravitational potential, it follows that 
\begin{equation}\label{parphi}
	r\frac{\partial \phi}{\partial r}=v^{2},
\end{equation}
 and the rotational velocity of the star satisfies  
\begin{equation}\label{V}
	v^{2}(r)=\frac{GM(r)}{r}+\mid b\mid c^{2}-\frac{a_{0}c^{2}}{2}r
\end{equation} where $ M(r) $ is the total ordinary mass of the galaxy interior to a fixed radius $ r $. Equation (\ref{V}) also applies to a galactic cluster; it demands an upper limit to r describing a large but finite cosmological structure.  

\subsection{The radial acceleration relation}
The radial acceleration relation describes the local link between baryons and dynamics in galaxies; it considers each individual point along the flat part of the rotation curve and the corresponding enclosed baryonic mass \cite{Lelli}. Although the flat part of the rotation curve generally pertains to its outer part, the RAR can be extended to the inner part of the rotation curve as described below.
\par The RAR is established for rotationally supported disk galaxies with extended stellar
and gas distributions. The disk component of a galaxy contains most of the stars, gas, and dust in a near-circular coplanar orbit around the galactic center, so a disk galaxy is generally composed of the disk(d), a bulge(b), a dark matter halo(DM), and a black hole(BH). Since dark energy (DE) exists everywhere, it must also be included as a gravitational component of the galaxy. The gravitational potential of each component of the galaxy can be expressed in the axisymmetric coordinates $(R,z)$ where the origin is at the galactic center, the $z$ axis is aligned with the galaxy's symmetry axis, and the radius $R$ is restricted to the plane of the disc. The superposition of the components of the weak-field gravitational potential of the galaxy demands 
\begin{equation}
	\phi(R,z)=\Sigma_{i}\varphi_{i}(R,z),
\end{equation} where $\phi_{i}(R,z)$ is the gravitational potential of the i-th component. The squared rotational velocity $v^{2}(R,z)$ of the disc galaxy follows from (\ref{parphi}):
\begin{equation}
	v^{2}(R,z)=R\Sigma_{i}\partial_{R}\phi_{i}(R,z)=R\partial_{R}\phi(R,z),
\end{equation}which is equivalent to
\begin{equation}\label{sumV}
	v^{2}(R,z)=v_{OM}^{2}+v_{DM}^{2}-v_{DE}^{2}
\end{equation} where $v_{OM}^{2}=v_{d}^{2}+v_{b}^{2}+v_{BH}^{2}$ is the weak-field gravitational effect of ordinary matter from the disk, bulge, and black hole, respectively. At a location $r$ relative to the center of the disk galaxy, $v^{2}(r)$ can be expressed in axisymmetric coordinates by the relation $r=\sqrt{R^{2}+z^{2}}$, so that the rotational velocity (\ref{sumV}) is equivalently described by (\ref{V}).

\par It is now shown how the dynamics between dark energy and ordinary matter in the outer part of the rotation curve, and between dark energy and dark matter in the inner part of the rotation curve of a galactic structure, determine the RAR and its physical origin.
  \subsubsection*{The outer rotation curve}
When the dark energy acceleration equals the Newtonian acceleration, 
\begin{equation}\label{NDE}
	\frac{GM(r)}{r^{2}}=\frac{a_{0}c^{2}}{2}.
\end{equation} It is clear from (\ref{V}) that the rotation curve is horizontally flat. This occurs at a critical radius $r_{c}$ defined by
\begin{equation}\label{rc}
	r_{c}:=\sqrt{\frac{2GM(r)}{a_{0}c^{2}}}.
\end{equation}When $r\geq r_{c}$, the outer part of the rotation curve is flat, but in general not horizontally flat as shown by (\ref{v2}). This means there is not an exact equilibrium in (\ref{NDE}) unless the deviation from equilibrium is corrected by $\alpha$ in
\begin{equation}\label{a0out}
	a_{0}=\frac{GM(r)\alpha}{c^{2}r^{2}}
\end{equation}where $\alpha>0$ is a parameter to be determined for each galactic structure at each point along the outer rotation curve. Then from (\ref{V}), the outer rotation curve has the observed acceleration 
\begin{equation}\label{aout}
	\begin{split}
	\frac{v^{2}}{r}&=\frac{GM(r)}{r^{2}}+\frac{GM(r)\sqrt{\alpha\gamma}}{r^{2}}-\frac{GM(r)\alpha}{2r^{2}}\\
	&=\frac{GM(r)}{r^{2}}(1+\sqrt{\alpha\gamma}-\frac{\alpha}{2}),
		\end{split}
\end{equation}which consists of the Newtonian, dark matter, and dark energy accelerations, respectively. This equation replaces the empirical relation in (\ref{RAR}). The physical origin of the RAR in the outer part of the rotation curve is determined by the equilibrium between the dark energy and baryonic accelerations. The observable acceleration depends directly on the baryonic acceleration along the radial direction commensurate with the definition of the RAR.
\par The product $\alpha\gamma$ is derived from (\ref{a0out}) and (\ref{TF4}):  $\alpha\gamma=\frac{v^{4}r^{2}}{G^{2}M^{2}}$. The SPARC data \cite{SPARC} was used to obtain $\alpha\gamma$ using $R_{p}$ for the radius of peak baryonic rotational velocity $V_{b}$, and $M_{b}$ for the total baryonic mass. Since $\alpha$ is the deviation from an exact equilibrium between dark energy and ordinary matter (exact when $\alpha=2$), and $\gamma$ is expected to be in the range $1<\gamma<1000$, there were 14 galaxies in the database that had $\alpha\gamma$ values greater than 2500, which were considered too large for the stated equilibrium to exist. The results are presented in Table 2: Calculation of $\alpha\gamma$ for 163 SPARC disc galaxies.
\par The value for $\gamma$ can be determined from a complete modeling of the extended rotation curves resulting from dark matter. Galaxy NGC3198 \cite{Karukes}, NGC2403 \cite{Barker}, and NGC5055 \cite{Bat,Xiao} have been modeled and the results for $\alpha$ and $\gamma$ are presented in Table 1.

\begin{table}[h!]
	\centering
	\caption{Calculation of $\alpha$ for specific galaxies}
	\begin{tabular}{lccccc}
		\hline
		Galaxy & $M_{b}\times10^{10}M_{\odot}$ & $M_{DM}\times10^{10}M_{\odot}$ & $\gamma=\frac{M_{DM}}{M_{b}}$ & $\alpha\gamma$ & $\alpha=\frac{\alpha\gamma}{\gamma}$ \\
		\hline
		NGC3198  & 3.398 & 85.6 & 25.19 & 52.43 &  2.08 \\
		\hline NGC2403  & 0.881 & 8.619 & 9.78  & 36.56 & 3.74  \\
		\hline NGC5055 & 7.539  & 142.46  & 18.90  & 60.27 & 3.19 
		 \\ 
	\end{tabular}
	\label{tab:SPARC_MGR_Final}
\end{table}
Galaxy NGC3198 has an alpha value of 2.08, which indicates its out rotation curve is nearly horizontally flat. The outer rotation curve of galaxy  NGC2403 is flat and slightly rising; that of NGC5055 is flat and slightly falling.

\subsubsection*{The inner rotation curve}
 The inner part of the rotation curve $r<r_{c}$, is typically baryon dominated and the observed acceleration is mainly Newtonian. However, dark energy and dark matter exist at every point in spacetime. When the dark matter acceleration equals the dark energy acceleration in a particular galactic structure, 
\begin{equation}\label{ab}
	\frac{\mid b\mid c^{2}}{r} =\frac{a_{0}c^{2}}{2}
\end{equation}and the observed acceleration from (\ref{V}) depends entirely on the baryonic matter commensurate with the RAR:
\begin{equation}\label{inrot}
	\frac{v^{2}}{r}=\frac{GM(r)}{r^{2}}.
\end{equation}From ((\ref{ab})),
\begin{equation}\label{a0in}
	a_{0}=\frac{4GM(r)\gamma}{c^{2}r^{2}}=\frac{4GM_{DM}(r)}{c^{2}r^{2}}
\end{equation} guarantees the rotation curve behaves Newtonian in the presence of dark matter and dark energy. 
\par Local gravitational energy-momentum, which manifests itself as dark energy and dark matter, exists at every point in space and time. The equilibrium between the dark energy and baryonic accelerations in the outer part of the galactic rotation curve, and the equilibrium of the dark energy and dark matter accelerations in the inner part of the rotation curve, is the physical origin of the RAR. The observed circular acceleration $\frac{v^{2}}{r} $, which must involve visible baryonic matter, and both the dark matter and dark energy accelerations, depends directly on the baryonic acceleration commensurate with the definition \cite{McG,Lelli} of the RAR. It holds for all galactic structures of any type and for all radii in the rotation curve; it is not limited to galaxies. Any deviation from the RAR evidences a state of non-equilibrium between the dark energy and Newtonian acceleration, or between the dark sector accelerations themselves.
\par Since the spherically symmetric solution of the Einstein equation is static because of that symmetry, the RAR should hold at all times. It was formulated mainly from late-time galaxies, recently observed to hold at intermediate redshifts \cite{Cio}, and it is shown \cite{MistMcG} that there is no significant difference between the RAR for early-time and late-time galaxies.
\subsection{The baryonic Tully-Fisher relation}
Contrary to the RAR, the
BTFR considers a single value of the asymptotically flat rotation velocity and total baryonic mass $M$ for each galaxy. From (\ref{aout}) at $r_{f}$ where the outer rotation curve is asymptotically flat and $M(r_{f})=M$, 
\begin{equation}\label{vf}
	v^{2}=Mf
\end{equation} where
\begin{equation}\label{f}
	f=\frac{G}{r_{f}}(1+\sqrt{\alpha\gamma}-\frac{\alpha}{2})
\end{equation}
From (\ref{vf}), we can write 
\begin{equation}\label{xf}
	M=Av^{2+x}
\end{equation}where $f^{-1}=Av^{x}$ and $A$ is a parameter that maintains the dimensionality of (\ref{xf}). This is the BTFR relation with $x=\frac{d\log M}{d\log v}-2$. 
\par From (\ref{V}), we can define
\begin{equation}\label{Delta}
	\Delta(r):=-\frac{GM(r)}{r^{2}}+\frac{a_{0}c^{2}}{2}	
\end{equation} where $\Delta(r)$ is the difference between the Newtonian and the dark energy accelerations, and write
\begin{equation}\label{v2}
	v^{2}=\mid\ b\mid c^{2}-r\Delta(r),
\end{equation}which holds for all finite $r>0$. If $r$ is large, (\ref{v2}) describes the observed linearly falling or rising rotation curves of most galaxies. It follows that 
\begin{equation}\label{V4}
	v^{4}+2\mid b\mid c^{2}r\Delta(r)-r^{2}\Delta(r)^{2}=b^{2}c^{4}.
\end{equation}
If $\Delta<0$, the rotation curve linearly rises for large $r$. The left hand side of (\ref{V4}) would require an exponent less than four to be equivalently expressed as a power of $v$. If $\Delta(r)>0$ and $r$ is very large, the dark energy acceleration in (\ref{v2}) dominates and we have a linearly falling rotation curve; the equivalent power of $v$ would be require an exponent less than four. However, if $2\mid b\mid c^{2}>r\Delta(r) $ and $\Delta(r)>0$, $\frac{1}{c}\sqrt{\frac{2GM(r)}{a_{0}}}<r<\frac{2}{c}\sqrt{\frac{\gamma GM(r)}{a_{0}}}(1\pm\sqrt{1+\frac{2}{\gamma}})$, which implies a power of $v$ with a value $>4$. Thus, the BTFR exponent $2+x$ is generally less than four, but can be greater than four when dark energy is dominant in a particular region of the rotation curve. The physical origin of the Tully Fisher relation follows directly from the existence of ordinary matter, dark energy, and dark matter in the rotation curve (\ref{V}) that is obtained from the dynamics of the modified Newtonian force equation (\ref{Newt}).

\subsection{The power four Tully-Fisher relation}
 When the magnitude of the Newtonian acceleration equals that of the dark energy acceleration, (\ref{NDE}) demands $\Delta(r)=0$,  which generates the horizontally flat rotation curve
 \begin{equation}\label{vhflat}
 v^{2}=\mid b \mid c^{2}
 \end{equation} representing a pure dark matter halo with the parameter $\mid b\mid$. The constant rotation speed requires an enclosed mass that increases linearly with $r$ for $r_{c}\leq r$, which follows directly from (\ref{b}) and (\ref{NDE}):
 \begin{equation}
 	M(r)=\frac{\mid b\mid c^{2}r}{\sqrt{2\gamma}G}\;\;\text{or}\;\;	M_{DM}(r)=\frac{\mid b\mid c^{2}\sqrt{\gamma}r}{\sqrt{2G}}.
 \end{equation}  The linear behavior does not not hold indefinitely; it must end at the maximum value of $r=r_{d}$ defined in (\ref{rd}).
  \par From (\ref{vhflat}), we obtain the power-four TF relation
 \begin{equation}\label{TF4}
 	v^{4}=c^{2}a_{0}\gamma GM(r)=c^{2}a_{0}GM(r)_{DM},
 \end{equation}
 
  \subsubsection{MOND}
 Modified Newtonian Dynamics (MOND) defines an acceleration scale below which Newtonian dynamics is replaced by a gravitational acceleration that is the square root of the Newtonian acceleration $a_{N}$: $a=\sqrt{A_{0}a_{N}}=\sqrt{A_{0}\frac{GM}{r^{2}}}$. The acceleration constant $A_{0}$, although unknown from  first principles, must be the same for all galaxies \cite{Gent} and is generally set at $A_{0}=1.2\times10^{-10} $. MOND has a $v^{4}$ structure that follows directly from (\ref{TF4}):
 \begin{equation}\label{MOND}
 v^{4}=A_{0}GM(r)
 \end{equation}  where the MOND acceleration $A_{0}$ is defined in terms of the MGR parameters as  
 \begin{equation}\label{A0}
 	A_{0}:=c^{2}a_{0}\gamma,
 \end{equation} which proves MGR subsumes MOND. From (\ref{a0out}), $A_{0}=\alpha\gamma\frac{GM(r)}{r^{2}}$, so $A_{0}$ is constant if and only if those parameters change accordingly with the Newtonian acceleration for each galaxy. MGR reproduces the universal acceleration of MOND when the Newtonian acceleration is small. However, if $\frac{GM(r)}{r^{2}}$ is not in the small range of $A_{0}$, the Newtonian acceleration dominates, which demands equilibrium between the dark sector accelerations; (\ref{a0in}) holds and $ A_{0}=\frac{4GM(r)\gamma^{2}}{r^{2}}$. 
 \par Thus, MGR explains MOND relative to the critical radius (\ref{rc}); if $r\geq r_{c}$, there is an equilibrium between the Newtonian and dark energy forces and $A_{0}=\alpha\gamma\frac{GM(r)}{r^{2}}$ is small and constant because the parameters $\alpha$ and $\gamma$ are galaxy dependent and compensate changes in the Newtonian acceleration. When $r<r_{c}$, the dark sector forces are in equilibrium and the Newtonian gravity precisely holds. MGR is a covariant theory that subsumes the MOND paradigm at a fundamental level.
 
 \par Although $A_{0}$ is considered to be a fundamental constant, horizontally flat galaxies with the $v^{4}$ behavior in (\ref{MOND}) cannot be described with the constant acceleration of $A_{0}=1.2\times10^{-10}$. For example, the horizontally flat power four  galaxy NGC 3198 with $M=4.4\times10^{10}M_{\odot} $, (\ref{MOND}) generates a rotation velocity of 162.7km/s, but the observed rotation velocity \cite{Karukes} for $r>17\text{kpc}$ is 150km/s; similarly, NGC 6503 with $M=1.9\times10^{40}kg$ has a MOND velocity of 111.1km/s compared to the observed value of 116km/s in Fig. 11 \cite{Greis} for $r>3.5\text{kpc}$.  These counter-examples prove that MOND does not have a pure power four circular velocity behavior with the constant acceleration $A_{0}$. Nevertheless, MOND lessens the strict power four rotational velocity dependence and maintains the constancy of $A_{0}$ from a $v^{y}$ curve with $y=3.94$ for NGC 3198, and $y=4.04$ for NGC 6503. 
\par A fundamental difference between MGR and MOND is that MOND fails to account for the gravitational lensing of galaxy clusters. In contrast, MGR successfully describes \cite{ND} the lensing of galactic cluster SDSS J0900+2234 with an Einstein ring. Moreover, MOND cannot describe the observed \cite{Gall} Newtonian acceleration between galactic clusters that MGR explains in Section 5.
 
  \subsection{Extended flat rotation curves}
Flat circular rotation curves have been measured that show no clear indication of a decline \cite{Mist} out to 1Mpc. This behavior
persists for both early and late-type galaxies, and can be explained by splitting the dark matter term $\mid b\mid c^{2}=(\mid \bar{b}\mid+\mid \tilde{b}\mid)c^{2}$ with $\sqrt{\gamma}=\sqrt{\bar{\gamma}}+\sqrt{\tilde{\gamma}}$ so that
 (\ref{V}) becomes 
\begin{equation}\label{bave}
	v^{2}=\mid \bar{b}\mid c^{2}
\end{equation} if
\begin{equation}\label{tildeb}
	\frac{GM}{r}+\mid \tilde{b}\mid c^{2}=\frac{a_{0}c^{2}}{2}r.
\end{equation} This quadratic equation in $r$ for a fixed enclosed baryonic mass has the solution
\begin{equation}\label{rd}
	\begin{split}
		r_{d}&=\frac{1}{c}\sqrt{\frac{\tilde{\gamma}GM}{a_{0}}}(1\pm\sqrt{1+\frac{2}{\tilde{\gamma}}})\\
		&=r_{c}\sqrt{\frac{\tilde{\gamma}}{2}}(1\pm\sqrt{1+\frac{2}{\tilde{\gamma}}})
	\end{split}
\end{equation}using (\ref{rc}). The length of the rotation curve beyond the fixed point $r_{c}$ where it first becomes horizontally flat is determined by $\tilde{\gamma}$, which depends on $\gamma$, the ratio of the dark matter to ordinary matter in (\ref{gamma}). The length of the extended flat rotation curve depends entirely on the amount of dark matter in the galaxy or galactic cluster. The dark matter halo of any galaxy cannot extend indefinitely because dark matter is attractive and proportional to ordinary matter by $M_{DM}=\gamma M$; $\gamma$ is finite and galaxy dependent.

\section{Wide binaries}
Studies of the Gaia data in the low acceleration regime of solar mass wide binary stars evidence a clear gravitational anomaly above separations of a few kau  of the type $ G\rightarrow\beta G$ with $\beta=1.43\pm 0.06$ consistent with MOND AQUAL expectations of
$\beta\approx1.4$\cite{Chae,Hern2} and $\beta=1.5\pm 0.2$\cite{Hern1}, ruling out a low acceleration Newtonian behavior. This deviation from the Newtonian gravitational constant is thought to be from the effect of an external gravitational field. However, dark matter and dark energy exist at every point in spacetime; they can equally explain the external field effect in wide binaries.
\par The Newtonian acceleration is small in wide binary systems and it is possible for it to equal the dark energy acceleration. Then (\ref{NDE}) generates
\begin{equation}
	a_{0}=\frac{2GM(r)}{r^{2}c^{2}}.
\end{equation}A wide binary system of ordinary mass $M=1.5M_{\odot}$ at a separation of $r_{WB}=20\text{kau}$ generates the Newtonian, dark matter, and dark energy accelerations $a_{N}=2.23\times10^{-11}ms^{-2}$, $a_{DM}=2.23\sqrt{2\gamma}\times10^{-11}ms^{-2}$, $a_{DE}=2.23\times10^{-11}ms^{-2}$, respectively. In the low gravitational regime, $\gamma$ is expected to be slightly above one; if $\gamma=1.01$, $\sqrt{2\gamma}=1.43$ and if $\gamma=1.1$, $\sqrt{2\gamma}=1.56$, which is commensurate with \cite{Chae} and \cite{Hern1,Hern2}, respectively. \par MGR explains the wide binaries anomaly without the effect of the gravitational field of baryonic masses external to the wide binary system, and without modifying the gravitational constant. The strong equivalence principle is not violated in MGR. 
\section{Large scale gravity between galaxy clusters}
The gravitational acceleration between galactic clusters was measured using the Atacama Cosmology Telescope. On scales from 30 to 230 megaparsecs, the gravitational acceleration between pairs of galactic clusters at separation $r$ was found to be proportional to $\frac{1}{r^{n}}$ with $n=2.1\pm 0.3$\cite{Gall}. This is consistent with Newtonian gravity in accordance with the standard $\Lambda$CDM model, where it holds incredibly well terrestrially and within individual galaxies. This study shows dark matter in the $\Lambda$CDM model is required to explain the rotations of galaxies and the movements of galaxies within clusters; it severely constrains modified theories of gravity that dismiss dark matter. It specifically shows that MOND cannot account for large scale gravitation because it varies as $\frac{1}{r}$, not $\frac{1}{r^{2}}$. 
\par The study was based on the cosmological parameters that can be obtained from the pairwise velocity between any two galaxies or galactic clusters (Clusters) in an expanding background described by the FLRW metric. The proper separation of the Clusters located at coordinates $\vec{x_{1}}$ and $\vec{x_{2}}$ comoving in an expanding background is given by
\begin{equation}
	\vec{r_{12}}=a(t)\vec{x_{12}(t)} 
\end{equation}where $\vec{x_{12}}=\vec{x_{1}}-\vec{x_{2}}$. Using the peculiar velocity $\vec{v_{i}}=a\dot{\vec{x_{i}}}\;\;i=1,2$, the pairwise velocity is $\vec{v_{12}}=\vec{v_{1}}-\vec{v_{2}}+\frac{\dot{a}}{a}\vec{r_{12}}$.
\par Rather than using the pairwise velocity to investigate the large scale nature of gravity, it is interesting to explore the possibility that dark energy described by $\varPhi_{00}$ may have an inherent structure that leads to the Newtonian gravitational behavior. Consider two Clusters located at fixed coordinates $\vec{x_{1}}$ and $\vec{x_{2}}$ comoving in an expanding open Universe described by the FLRW metric with $ \kappa=1$. Then $\vec{v_{i}}=0$ and the relative radial acceleration between the Clusters is $\frac{\ddot{r_{12}}}{r_{12}}=\frac{\ddot{a}}{a} $. From (\ref{F1}), the relative radial gravitational acceleration of the Clusters in an expanding Universe filled with dark energy and dark matter is
\begin{equation}\label{r12tot}
	\frac{\ddot{r_{12}}}{r_{12}}=-\frac{4\pi G}{3}(\varrho_{tot}+\frac{3p_{tot}}{c^{2}})+\frac{c^{2}}{6}(\varPhi_{00}+3\Phi^{1}_{1})
\end{equation}where $\varrho_{tot}=\varrho_{1}+\varrho_{2}$ and $p_{tot}=p_{1}+p_{2}$ is the total baryonic mass density and pressure, respectively. 
\par The dark energy of the Clusters in an open Universe 
\begin{equation}
	\varPhi_{00}=\Lambda+\frac{8\pi G\varrho_{tot}}{c^{2}}-\frac{3x^{2}_{12}}{r_{12}^{2}}-\frac{6}{c^{2}}\int \frac{\dot{H}}{a}da
\end{equation}follows from (\ref{1Phi00}), so the relative radial gravitational acceleration of the Clusters is
\begin{equation}
	\frac{\ddot{r_{12}}}{r_{12}}=-\frac{c^{2}x^{2}_{12}}{2r_{12}^{2}}+\frac{\Lambda c^{2}}{6}-\frac{4\pi Gp_{tot}}{c^{2}}-\int\frac{\dot{H}}{a}da+\frac{c^{2}\varPhi^{1}_{1}}{2}
\end{equation}where $r_{12}=\sqrt{r^{2}_{1}+r^{2}_{2}-2r_{1}r_{2}\cos\theta}$ with $\theta$ the angle between the radial vectors $\vec{r_{1}}$ and $\vec{r_{2}}$. 
\par Thus indeed, dark energy inherently contains the $\frac{1}{r^{2}_{12}}$ structure that is responsible for the Newtonian gravitational behavior. This result holds for all scales because there were no restrictions on the scale of the expanding Clusters. Moreover, the fixed comoving coordinates can be applied to any two objects in the expanding Universe. Since dark energy exists everywhere in spacetime, the Newtonian behavior of gravity fundamentally depends on dark energy.

\section{Discussion and conclusions}
 The tight correlation between the distribution of baryonic matter in galaxies and their observed dynamics, even in systems where dark matter dominates, suggests a universal acceleration scale exists in galaxies. This universal acceleration, known as the RAR, appears to challenge the $\Lambda$CDM standard cosmological model and favor MOND. However, MOND also deviates from the RAR, so these deviations are not fully explained by the $\Lambda$CDM model or MOND. 
 \par MGR explains gravity and the dark sector geometrically and describes dark matter as a spin-1 particle, which eliminates the quandary between the standard model and MOND. MGR is the only metric-compatible theory of gravity that describes local gravitational energy-momentum by the connection-independent symmetric tensor $\varPhi_{\alpha\beta}$, which completes the total energy-momentum tensor of the Einstein equation. Dark energy is explained by the energy-related component $\varPhi_{00}$, and dark matter by the stress components $\varPhi_{ij}$. Dark matter is extremely difficult to observe because it is fundamentally related to the Lorentzian metric of spacetime itself. 
 \par The modified Newtonian force equation in MGR contains the dark energy and dark matter forces of the dark sector in addition to the Newtonian force of GR for weak gravitational fields. The physical origin of the RAR follows directly from the equilibrium between the dark energy and Newtonian forces in the outer part of the rotation curve of galaxies or galaxy clusters, and that between the dark sector forces in the inner part of the rotation curve. The observed acceleration in the RAR depends directly on the baryonic acceleration of these structures. It was specifically shown that
\begin{itemize}
	\item  MGR is a covariant theory that subsumes the MOND paradigm at a fundamental level.  It is more general than MOND because both $a_{0}$ and $\gamma$ are not fixed parameters in MGR.
	\item The constant parameter in MOND cannot satisfy a pure $v^{4}$ power law for all galaxies. Rather, MOND approximates the exponent while keeping $A_{0}$ fixed.
	\item The Tully-Fisher relation holds for the horizontally flat and linearly rising or falling outer rotation curves. The exponent of the rotational velocity is generally less than four but can be greater than four under certain conditions;
	\item  The length of the extended flat rotation curve depends entirely on the amount of dark matter in the galaxy or galactic cluster. It cannot extend indefinitely.
	\item MGR explains the wide binaries anomaly without the effect of the gravitational field of baryonic masses external to the wide binary system, and without modifying the gravitational constant $G$. 
	\item The Newtonian radial gravitational acceleration holds for all scales in MGR, including the enormously large scale of galactic clusters. The structure of dark energy described by $\varPhi_{00}$ explains the nature of Newtonian gravity in the Universe.
\end{itemize}

\section*{Acknowledgments} I would like to thank Dr. Karsten M{\"u}ller for the many interesting discussions about dark matter and Modified General Relativity.
\appendix
\section {Orthogonal Decomposition Theorem}
\begin{theorem}
		A non-divergenceless (0,2) symmetric tensor $ \gamma_{\alpha\beta} $ in the symmetric cotangent bundle $ \Gamma^{\infty}_{c} (S^{2}T^{\ast}\mathcal{M}) $ with smooth sections of compact support on an n-dimensional noncompact paracompact boundaryless time-oriented Lorentzian manifold $ \mathcal{M} $ with a Levi-Civita connection can be orthogonally decomposed as $
	\gamma_{\alpha\beta}= v_{\alpha\beta}+ \varPhi_{\alpha\beta} $ where $v_{\alpha\beta}  $ represents a linear sum of symmetric divergenceless (0,2) tensors and $\varPhi_{\alpha\beta}=\frac{1}{2}\mathcal{L}_{X}g_{\alpha\beta}+\mathcal{L}_{X}(u_{\alpha}u_{\beta})$. The timelike unit vector $\bm{u}  $ is collinear with one of the pair of regular vectors in the line element field $ (\bm{X},-\bm{X}) $ and $\bm X$ is not a Killing vector.
\end{theorem}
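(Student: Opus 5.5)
The plan is to work in the Hilbert space of compactly supported smooth symmetric $(0,2)$ tensors. On this space the inner product will be $\langle A,B\rangle=\int A_{\alpha\beta}B^{\alpha\beta}\sqrt{-g}\,d^{n}x$. The form is indefinite because $g$ is Lorentzian, so I would first pass to the positive definite product built from the Riemannian metric $g^{+}_{\alpha\beta}$ of (\ref{gab}). Contracting indices with $g^{+}$ amounts to inserting the factor $(g^{\alpha\mu}+2u^{\alpha}u^{\mu})$ on each index, which is exactly the structure in (\ref{intPhi}).

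The first step is the standard Berger--Ebin type splitting. Consider the operator $\delta:\gamma_{\alpha\beta}\mapsto -\nabla^{\alpha}\gamma_{\alpha\beta}$, taken with respect to the $g^{+}$ inner product, and its formal adjoint $\delta^{*}$. Acting on covectors, $\delta^{*}$ is the symmetrized derivative, and with the $u$-corrections from $g^{+}$ it becomes $\delta^{*}X=\frac12\mathcal{L}_{X}g^{+}_{\alpha\beta}$. By (\ref{Phiab}) this equals $\frac12\mathcal{L}_{X}g_{\alpha\beta}+\mathcal{L}_{X}(u_{\alpha}u_{\beta})$. I would then invoke the elliptic splitting $\Gamma(S^{2}T^{*}\mathcal{M})=\ker\delta\oplus\operatorname{im}\delta^{*}$. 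Ellipticity holds because $\delta\delta^{*}$ has injective symbol with respect to the Riemannian $g^{+}$. Noncompactness and paracompactness supply the global nonvanishing field and hence $g^{+}$ via the line element field, as in \cite{Markus,CB,Hawk}. This yields $\gamma_{\alpha\beta}=v_{\alpha\beta}+\varPhi_{\alpha\beta}$ with $v\in\ker\delta$, meaning $v$ is divergence-free (a sum of divergenceless tensors), and $\varPhi=\delta^{*}X$. Orthogonality of the two pieces is then just integration by parts, and the boundary terms vanish by compact support.

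The second step identifies $X$ and its collinearity with $\bm u$. The covector $X$ solves $\delta\delta^{*}X=\delta\gamma$. Because $\gamma$ is not divergenceless, $\delta\gamma\neq0$, so $\varPhi\neq0$, and therefore $X$ is not Killing, since $\varPhi$ vanishes iff $X$ is Killing. To place $X$ in the line element field, I would write $X=f\bm u$ and choose $\bm u$ as the timelike unit vector used to build $g^{+}$ from $g$. A general solution need not be collinear with the $\bm u$ defining $g^{+}$, so I plan a fixed-point or consistency argument. Given a solution $X$ of the elliptic equation that is timelike and nonvanishing, define $\bm u:=X/|X|$ and rebuild $g^{+}$ from it. Time-orientation lets one select $+X$ from the pair $(\bm X,-\bm X)$ consistently.

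The main obstacle is this last self-consistency. The operator $\delta^{*}$ itself depends on $\bm u$ through $g^{+}$, so one must show there exists $\bm u$ whose associated solution $X$ is collinear with it, and that this $X$ is timelike and nowhere zero. I would attempt it by treating the equation as nonlinear in $\bm u$. Fix $f$, and use the freedom that adding any Killing field of $g^{+}$ to $X$ leaves $\varPhi$ unchanged, together with the variational characterization (\ref{u}), to pick $\bm u$. Isolated zeros would be swept to infinity as in \cite{Markus}, and boundedness would be imposed as discussed in the introduction. If this fails in general, the fallback is to prove the decomposition for the $\bm u$ fixed by the given Lorentzian structure and read collinearity as the defining choice of the line element field.
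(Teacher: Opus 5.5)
\textbf{Main gap: which divergence.} Your decomposition step mixes up two different divergences. The identification $\delta^{*}X=\frac12\mathcal{L}_{X}g^{+}_{\alpha\beta}$ is valid only if $\delta$ is the divergence of the Riemannian metric, $\gamma\mapsto-g^{+\alpha\lambda}\nabla^{+}_{\lambda}\gamma_{\alpha\beta}$. In that case $\ker\delta$ consists of tensors with $\nabla^{+\alpha}v_{\alpha\beta}=0$. Suppose instead you keep $\delta$ as the Lorentzian divergence and take its adjoint in the $g^{+}$ inner product. Then the principal part picks up an extra $u_{\rho}u^{\lambda}\nabla_{\lambda}$ term, and the complement is no longer $\frac12\mathcal{L}_{X}g^{+}$. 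The theorem needs $v_{\alpha\beta}$ to be divergenceless for the Levi-Civita connection $\nabla$ of the Lorentzian $g$. That is what later lets the paper read off $g_{\alpha\beta}$ and $G_{\alpha\beta}$ from $v_{\alpha\beta}$. The paper handles this transfer in a separate step, (\ref{Dv}), where it compares $\nabla_{\alpha}v^{\alpha}_{\beta}$ with $\nabla^{+}_{\alpha}v^{\alpha}_{\beta}$. Your phrase ``meaning $v$ is divergence-free'' skips this step. The same mix-up affects your inference that a non-divergenceless $\gamma$ gives $\delta\gamma\neq0$.

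The transfer is not a formality. $\nabla$ and $\nabla^{+}$ differ by a tensor built from $\nabla(u_{\alpha}u_{\beta})$. For example, $\nabla^{+\alpha}g^{+}_{\alpha\beta}=0$, whereas $\nabla^{\alpha}g^{+}_{\alpha\beta}=2\nabla_{\alpha}(u^{\alpha}u_{\beta})$. The latter vanishes only under an extra condition on $\bm u$ such as (\ref{uab}). So whatever argument you give for the transfer must use some property of $\bm u$.

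\textbf{Collinearity.} Your worry in the second step is real, but the paper does not resolve it either. It posits at the outset a regular, non-Killing line element field with $\bm u$ collinear to $\bm X$, builds $g^{+}$ from that $\bm u$, and uses the same $\bm X$ in the decomposition. That is your fallback, with collinearity assumed rather than derived. Your fixed-point sketch would not supply it. The Killing fields of $g^{+}$ form a finite-dimensional, generically trivial space, so they cannot rotate a solution of $\delta\delta^{*}X=\delta\gamma$ into the $\bm u$ direction. Also, (\ref{u}) is an Euler--Lagrange equation of the action and is not available for an arbitrary $\gamma$.

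\textbf{Compact support.} The claim that ``boundary terms vanish by compact support'' is not justified. On noncompact $\mathcal{M}$, the solution $X$ of $\delta\delta^{*}X=\delta\gamma$ is generally not compactly supported, so neither are $\varPhi_{\alpha\beta}$ and $v_{\alpha\beta}$. The splitting $\ker\delta\oplus\operatorname{im}\delta^{*}$ therefore does not hold as stated on $\Gamma^{\infty}_{c}$. The paper instead imposes a solvability condition on a compact set $K\supset\operatorname{supp}\delta\gamma$: $\delta\gamma$ must be orthogonal to every $\omega$ with $\delta\delta^{*}\omega|_{K}=0$. You need something of that kind for both existence and orthogonality.
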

\begin{proof}
	Let the Lorentzian manifold with metric $ (\mathcal{M},g_{\alpha\beta}) $ be noncompact paracompact, time-oriented, and boundaryless. A smooth regular line element field $(\bm{X},\bm{-X)}$ exists such that $\bm{X}$ is not a Killing vector field, as does a timelike unit vector $ \bm{u} $ collinear with one of the pair of line element vectors. Let $ \mathcal{M} $ be endowed with a smooth Riemannian metric $ g^{+}_{\alpha\beta} $. The smooth Lorentzian metric $ g_{\alpha\beta} $ is constructed from $ g^{+}_{\alpha\beta} $ and the unit covectors $ u_{\alpha}$ and $ u_{\beta}$ : $g_{\alpha\beta}=g^{+}_{\alpha\beta}-2u_{\alpha}u_{\beta} $. Let $\gamma $ and $ v $ belong to $\Gamma^{\infty}_{c} (S^{2}T^{\ast}\mathcal{M}) $, the cotangent bundle of symmetric $(0,2)$ tensors on $ \mathcal{M} $ with smooth sections of compact support, and let $K$ be a compact subset of $\Gamma^{\infty}_{c} (S^{2}T^{\ast}\mathcal{M})$  containing the support of $\delta\gamma $ such that $\delta\gamma$ is orthogonal to every $1$-form $\omega$ with the property that $\delta\delta^*\omega$ restricted to $K$ vanishes. A non-divergenceless $ (0,2) $ symmetric tensor $ \gamma_{\alpha\beta} $ can be orthogonally and uniquely decomposed with respect to $ g^{+}_{\alpha\beta} $ according to $ \gamma_{\alpha\beta}=v_{\alpha\beta}+\frac{1}{2}{\mathcal L}_Xg^{+}_{\alpha\beta}$ where $v_{\alpha\beta}  $ represents a linear sum of symmetric divergenceless (0,2) tensors and $ {\nabla^{+}}^{\alpha}v_{\alpha\beta}=0 $.
	\par The condition for decomposable with respect to $g^{+}$, $\delta\delta^{\star}\omega=0$, generates the wave equation $\square^{+} X_{\beta}=-\nabla_{\beta}^{+}\nabla_{\mu}^{+}X^{\mu}-R^{+\lambda}_{\beta}$, for which solutions exist. 
	\par The divergence of $ v_{\beta}^{\alpha} $ in the mixed tensor bundle can be written as $ \nabla_{\alpha}v_{\beta}^{\alpha}=\partial_{\alpha}v_{\beta}^{\alpha}+\frac{v_{\beta}^{\alpha}}{2g}\partial_{\alpha}g-\frac{1}{2}v^{\alpha\lambda}\partial_{\beta}g_{\alpha\lambda} $ where $g=det(g_{\alpha\beta})$. This generates the (0,1) tensor
	\begin{equation}\label{Dv}
		\begin{split}
			\nabla_{\alpha}v_{\beta}^{\alpha}-\nabla^{+}_{\alpha}v_{\beta}^{\alpha}=v_{\beta}^{\alpha}\partial_{\alpha}(\ln\sqrt{-g}-\ln\sqrt{g^{+}})+v^{\alpha\lambda}\partial_{\beta}(u_{\alpha}u_{\lambda})
		\end{split}
	\end{equation} using (\ref{gab}). The metric compatibility of $ g_{\alpha\lambda}$ demands $v^{\alpha\lambda}\nabla_{\beta}g_{\alpha\lambda}=0$. In Riemann normal coordinates, $v^{\alpha\lambda}\partial_{\beta}(u_{\alpha}u_{\lambda})=0 $ holds for all $v^{\alpha\lambda} $, and $g=-g^{+}=-1$. The right-hand side of (\ref{Dv}) vanishes, which implies  $\nabla_{\alpha}v_{\beta}^{\alpha}-\nabla^{+}_{\alpha}v_{\beta}^{\alpha}=0 $ in all coordinate systems. Thus, $\nabla_{\alpha}v^{\alpha}_{\beta}=0 $ since $\nabla^{+}_{\alpha}v_{\beta}^{\alpha}=0 $ and
	\begin{equation}\label{decomp}
		\gamma_{\alpha\beta}=v_{\alpha\beta}+\frac{1}{2}{\mathcal L}_Xg_{\alpha\beta}+{\mathcal L}_X(u_{\alpha}u_{\beta})
	\end{equation}with $\nabla^{\alpha}v_{\alpha\beta}=0 $. Using $ X^{\lambda}=fu^{\lambda} $ where $f\neq0  $ is the magnitude of $X^{\lambda}  $, the expression  $X^{\lambda}\nabla_{\lambda}(u_{\alpha}u_{\beta}) $ in the second term of (\ref{decomp}) then vanishes in an affine parameterization and $
	\gamma_{\alpha\beta}=v_{\alpha\beta}+\varPhi_{\alpha\beta}	$ where $			\varPhi_{\alpha\beta}:=\frac{1}{2}(\nabla_{\alpha}X_{\beta}+\nabla_{\beta}X_{\alpha})+u^{\lambda}(u_{\alpha}\nabla_{\beta}X_{\lambda}+u_{\beta}\nabla_{\alpha}X_{\lambda})$.
	Provided $ \gamma_{\alpha\beta}\neq0 $, the decomposition is orthogonal: $<v_{\alpha\beta},\varPhi_{\alpha\beta}>=0  $. Since $\bm X$ is not a Killing vector, $\varPhi_{\alpha\beta}$ does not vanish.
	
\end{proof}
\section {Variations of the action functional}
There are three variables in MGR: $g^{\alpha\beta}$, $ X^{\beta} $ and $ u^{\beta} $. However, since $\bm{X}$ and $\bm{u}$ are collinear, $ X^{\beta}=fu^{\beta} $ and $X_{\beta}=fu_{\beta}  $ where $ f\neq0 $ is the magnitude of  both $ \bm{X} $ and its covector. Strictly speaking, the independent variables are $g^{+\alpha\beta}$, $ u^{\beta} $ and $ f $. However, since $ \frac{\delta }{\delta g^{+\alpha\beta}}=\frac{\delta }{\delta g^{\mu\nu}}\frac{\delta g^{\mu\nu}}{\delta g^{+\alpha\beta}}=\frac{\delta }{\delta g^{\alpha\beta}} $, $ g^{\alpha\beta} $ can be treated as an independent variable in the variation with respect to the inverse metric. Variations of $ S $ with respect to $g^{\alpha\beta}$, $ u^{\beta} $ and $ f $ are developed as follows:
\subsection*{Variation of $S^{G}$ with respect to $g^{\alpha\beta}$}
Variation of $ S^{EH} $ is well known from any textbook on GR. What needs to be established is the variation of $ S^{G} $ with respect to the inverse metric where 
$ S^{G}=-a\int\Phi\sqrt{-g}d^{4} $. The parameter $ a $ is not needed in this calculation, but does belong in $ S $. $ \Phi $ can be expressed as $\Phi=\nabla_{\alpha}X_{\beta}(g^{\alpha\beta}+2u^{\alpha}u^{\beta})=\nabla_{\alpha}X_{\beta}g^{+\alpha\beta}  $ so
\begin{equation}\label{dgPhi}
	\begin{split}
		-\delta \int\Phi\sqrt{-g}d^{4}x
		=-\int\delta (\nabla_{\alpha}X_{\beta})g^{+\alpha\beta}\sqrt{-g}d^{4}x-\int\nabla_{\alpha}X_{\beta}(\delta g^{\alpha\beta}+2\delta (u^{\alpha}u^{\beta}))\sqrt{-g}d^{4}x\\
		-\int\nabla_{\alpha}X_{\beta}(g^{\alpha\beta}+2u^{\alpha}u^{\beta})\delta\sqrt{-g}d^{4}x
	\end{split}
\end{equation}
The third integral in (\ref{dgPhi}) is equivalent to
\begin{equation}\label{3}
	\begin{split}
		\frac{1}{2}\int\nabla_{\mu}X_{\nu}(g^{\mu\nu}+2u^{\mu}u^{\nu})g_{\alpha\beta}\delta g^{\alpha\beta}\sqrt{-g}d^{4}x.
	\end{split}
\end{equation} \par To compute the second integral in (\ref{dgPhi}), consider $g_{\alpha\lambda} u^{\alpha}u^{\beta}=u_{\lambda}u^{\beta} $. In a Lorentzian spacetime $ \bm{u} $ satisfies $ u_{\lambda}u^{\lambda}=-1 $ but in a Riemannian spacetime, $ u_{\lambda}u^{\lambda}=1 $. Then $ \delta(u_{\lambda}u^{\beta})=0 $ 
\footnote{In $ g^{+}_{\alpha\beta}$, $u_{\lambda}u^{\lambda}=1  $ so in $ g_{\alpha\beta} $: $u_{\lambda}u^{\beta}=g^{\beta\mu}u_{\mu}u_{\lambda}\\=(g^{+\beta\mu}-2u^{\beta}u^{\mu})u_{\lambda}u_{\mu}=u_{\lambda}u^{\beta}-2u_{\lambda}u^{\beta}=-u_{\lambda}u^{\beta}$. Thus, if $\lambda=\beta, -1=-1 $ and if $\lambda\neq\beta, u_{\lambda}u^{\beta}=0  $ and $ \delta(u_{\lambda}u^{\beta})=0 $.} and $ \delta(u^{\alpha}u^{\beta})g_{\alpha\lambda}=-u^{\alpha}u^{\beta}\delta g_{\alpha\lambda} $.
Contracting that with $ g^{\lambda\rho} $ and re-labelling the indicies gives
\begin{equation}\label{deltauu}
	\delta(u^{\alpha}u^{\beta})=u^{\beta}u_{\lambda}\delta g^{\alpha\lambda}.
\end{equation}The second integral can then be expressed as $ -\int\nabla_{\alpha}X_{\beta}(\delta g^{\alpha\beta}+2u^{\beta}u_{\lambda}\delta g^{\alpha\lambda})\sqrt{-g}d^{4}x $. Keeping $ \alpha $ and $ \beta $ fixed while summing over $ \lambda $ in the integrand: if $ \lambda=\beta$, $ \nabla_{\alpha}X_{\beta}(\delta g^{\alpha\beta}+2u^{\lambda}u_{\lambda}\delta g^{\alpha\beta})=-\nabla_{\alpha}X_{\beta}\delta g^{\alpha\beta}$, and when $ \lambda\neq\beta $ the integrand is $\nabla_{\alpha}X_{\beta}\delta g^{\alpha\beta}  $ so the integrand vanishes in the sum over all $ \lambda $.
\par The first integral in (\ref{dgPhi}) is now calculated.
\begin{equation}\label{1}
	\begin{split}
		-\int\delta(\nabla_{\alpha}X_{\beta})g^{+\alpha\beta}\sqrt{-g}d^{4}x=-\int(\nabla_{\alpha}\delta X_{\beta}-X_{\lambda}\delta\Gamma^{\lambda}_{\alpha\beta})g^{+\alpha\beta}\sqrt{-g}d^{4}x\\
		=2\int\nabla_{\alpha}(u^{\alpha}u^{\beta})\delta X_{\beta}\sqrt{-g}d^{4}x+\int X_{\lambda}\delta\Gamma^{\lambda}_{\alpha\beta}g^{+\alpha\beta}\sqrt{-g}d^{4}x
	\end{split}
\end{equation}integrating by parts. The second integral in (\ref{1}) is expressed as
\begin{equation}\label{I2}
	\frac{1}{2}\int X_{\lambda}g^{+\alpha\beta}g^{\lambda\rho}(\nabla_{\alpha}\delta g_{\rho\beta}+\nabla_{\beta}\delta g_{\rho\alpha}-\nabla_{\rho}\delta g_{\alpha\beta})\sqrt{-g}d^{4}x.
\end{equation}To compute this integral, we need to use:\newline $ \delta(g^{+\alpha\beta}g_{\alpha\lambda})=\delta(\delta^{\beta}_{\lambda}+2u_{\lambda}u^{\beta})=0 $ so $ \delta g^{+\alpha\beta}g_{\alpha\lambda}=-g^{+\alpha\beta}\delta g_{\alpha\lambda} $. Contracting that with $ g^{\lambda\rho} $ gives
\begin{equation}\label{g+1}
	\delta g^{+\rho\beta}=-g^{+\alpha\beta}g^{\lambda\rho}\delta g_{\alpha\lambda}.
\end{equation}
Continuing with (\ref{I2}), we can use the Lorentz connection $ \nabla $ with $ g^{+} $ in the first two terms because it can operate on the divergence of a symmetric tensor as shown in the ODT:
\begin{equation}\
	\begin{split}
		-\frac{1}{2}\int X_{\lambda}
		(\nabla_{\alpha}\delta g^{+\alpha\lambda}+\nabla_{\beta}\delta g^{+\lambda\beta})\sqrt{-g}d^{4}x-\frac{1}{2}\int X_{\lambda}(g^{\alpha\beta}+2u^{\alpha}u^{\beta})g^{\lambda\rho}\nabla_{\rho}\delta g_{\alpha\beta}\sqrt{-g}d^{4}x\\
		=-\int X_{\lambda}
		\nabla_{\alpha}\delta g^{+\alpha\lambda}\sqrt{-g}d^{4}x+\frac{1}{2}\int X_{\lambda}g_{\alpha\beta}\nabla^{\lambda}\delta g^{\alpha\beta}\sqrt{-g}d^{4}x-\int X_{\lambda}u^{\alpha}u^{\beta}\nabla^{\lambda}\delta g_{\alpha\beta}\sqrt{-g}d^{4}x\\
		=-\int X_{\beta}
		\nabla_{\alpha}\delta (g^{\alpha\beta}+2u^{\alpha}u^{\beta})\sqrt{-g}d^{4}x+\frac{1}{2}\int X_{\lambda}g_{\alpha\beta}\nabla^{\lambda}\delta g^{\alpha\beta}\sqrt{-g}d^{4}x\\+\int X_{\lambda}u_{\alpha}u_{\beta}\nabla^{\lambda}\delta g^{\alpha\beta}\sqrt{-g}d^{4}x
	\end{split}
\end{equation}using $ u^{\alpha}u^{\beta}\nabla^{\lambda}\delta g_{\alpha\beta}=u_{\mu}u_{\nu}\nabla^{\lambda}g^{\alpha\mu}g^{\beta\nu}\delta g_{\alpha\beta}=-u_{\alpha}u_{\beta}\nabla^{\lambda}\delta g^{\alpha\beta} $. Integrating by parts gives
\begin{equation}
	\begin{split}
		\int [\nabla_{\alpha}X_{\beta}-\frac{1}{2}\nabla^{\lambda}X_{\lambda}g_{\alpha\beta}-\nabla^{\lambda}(X_{\lambda}u_{\alpha}u_{\beta})]\delta g^{\alpha\beta}\sqrt{-g}d^{4}x+2\int \nabla_{\alpha}X_{\beta}\delta (u^{\alpha}u^{\beta})\sqrt{-g}d^{4}x\\
		=\int [\nabla_{\alpha}X_{\beta}-\frac{1}{2}\nabla^{\lambda}X_{\lambda}g_{\alpha\beta}-\nabla^{\lambda}(X_{\lambda}u_{\alpha}u_{\beta})+2u^{\lambda}u_{\beta}\nabla_{\alpha}X_{\lambda}]\delta g^{\alpha\beta}\sqrt{-g}d^{4}x
	\end{split}
\end{equation}from (\ref{deltauu}) and re-labelling indicies. \par The complete variation is then
\begin{equation}
	\begin{split}
		\int [\nabla_{\alpha}X_{\beta}-\frac{1}{2}\nabla^{\lambda}X_{\lambda}g_{\alpha\beta}-\nabla^{\lambda}(X_{\lambda}u_{\alpha}u_{\beta})+2u^{\lambda}u_{\beta}\nabla_{\alpha}X_{\lambda}\\+\frac{1}{2}\nabla_{\mu}X_{\nu}(g^{\mu\nu}+2u^{\mu}u^{\nu})g_{\alpha\beta}]\delta g^{\alpha\beta}\sqrt{-g}d^{4}x\\
		+2\int\nabla_{\alpha}(u^{\alpha}u^{\beta})\delta X_{\beta}\sqrt{-g}d^{4}x.
	\end{split}
\end{equation}Since the variations in $ \delta g^{\alpha\beta} $ are independent to those of $ \delta X_{\beta} $, the variation in S will yield
\begin{equation}\label{nuabapp}
	\nabla_{\alpha}(u^{\alpha}u^{\beta})=0
\end{equation} and the contribution from the variation of $ S^{G} $ with respect to $ g^{\alpha\beta} $ is
\begin{equation}
	\begin{split}
		\int[\nabla_{\alpha}X_{\beta}+2u^{\lambda}u_{\beta}\nabla_{\alpha}X_{\lambda}-\nabla^{\lambda}(X_{\lambda}u_{\alpha}u_{\beta})+\nabla_{\mu}X_{\nu}u^{\mu}u^{\nu}g_{\alpha\beta}]\delta g^{\alpha\beta}\sqrt{-g}d^{4}x\\
		=\int[\nabla_{\alpha}X_{\beta}+2u^{\lambda}u_{\beta}\nabla_{\alpha}X_{\lambda}-X^{\lambda}\nabla_{\lambda}(u_{\alpha}u_{\beta})+\nabla_{\mu}X_{\nu}(u^{\mu}u^{\nu}g_{\alpha\beta}-u_{\alpha}u_{\beta}g^{\mu\nu})]\delta g^{\alpha\beta}\sqrt{-g}d^{4}x\\
		=\int[\frac{1}{2}(\nabla_{\alpha}X_{\beta}+\nabla_{\beta}X_{\alpha})+u^{\lambda}(u_{\alpha}\nabla_{\beta}X_{\lambda}+u_{\beta}\nabla_{\alpha}X_{\lambda})\\+\nabla_{\mu}X_{\nu}(u^{\mu}u^{\nu}g_{\alpha\beta}-u_{\alpha}u_{\beta}g^{\mu\nu})]\delta g^{\alpha\beta}\sqrt{-g}d^{4}x
	\end{split}
\end{equation}using the symmetry in $ \delta g^{\alpha\beta} $ and setting $-X^{\lambda}\nabla_{\lambda}(u_{\alpha}u_{\beta})=-fu^{\lambda}\nabla_{\lambda}(u_{\alpha}u_{\beta})=0  $ in an affine parameterization where $ f\neq0 $ is the magnitude of $ X^{\lambda} $. The last term in the variation vanishes, which follows by writing the tensor in brackets   $-u_{\alpha}u_{\beta}g^{\mu\nu}+u^{\mu}u^{\nu}g_{\alpha\beta}$ as its equivalent, $\frac{1}{2}(g^{{+}{\mu\nu}}g_{\alpha\beta}-g^{+}_{\alpha\beta}g^{\mu\nu}) $, and choosing an orthonormal basis $(e_{\alpha})$ at a point $p\in\mathcal{M} $ for $ g^{+} $ with $ e_{0}=u $. Then, $ u^{0}=u_{0}=1 $, $ u^{i}=u_{i}=0\;(i=1,2,3)$, $g^{+}_{\alpha\beta}=\delta_{\alpha\beta}$, $ g^{00}=-g^{{+}00}$ and $g_{00}=-g^{+}_{00}$, with all other components of the metric g equal to those of the metric $ g^{+}$. From the definition of $\varPhi_{\alpha\beta}  $
\begin{equation}\label{}
	\varPhi_{\alpha\beta}:=\frac{1}{2}(\nabla_{\alpha}X_{\beta}+\nabla_{\beta}X_{\alpha})+u^{\lambda}(u_{\alpha}\nabla_{\beta}X_{\lambda}+u_{\beta}\nabla_{\alpha}X_{\lambda})
\end{equation}it follows that 
$ \delta{S}^{G}=-a\delta \int\Phi\sqrt{-g}d^{4}x=a\int\varPhi_{\alpha\beta}\delta g^{\alpha\beta}\sqrt{-g}d^{4}x.$
\subsection*{Variation of $S$ with respect to $f$}
$ f\neq0 $, the magnitude of $ X^{\beta} $, is independent of $ u^{\beta} $ so variation of $ S $ with respect to $ f $ must be performed. With $ X^{\beta} $ and its first derivative not appearing in the matter energy-momentum tensor, variation with respect to $ f $ involves only $ S^{G} $:
\begin{equation}
	\begin{split}
		S^{G}=-a\int[\nabla_{\alpha}(fu^{\alpha})+2u^{\alpha}u^{\beta}\nabla_{\alpha}(fu_{\beta})]\sqrt{-g}d^{4}x\\
		=-a\int(f\nabla_{\alpha}u^{\alpha}-u^{\alpha}\nabla_{\alpha}f)\sqrt{-g}d^{4}x
	\end{split}
\end{equation}
using $ X_{\alpha}=fu_{\alpha}$ and $u^{\beta}\nabla_{\alpha}u_{\beta}=0$. Hence, 
\begin{equation}
	\begin{split}
		\delta S^{G}&=-a\int(\nabla_{\alpha}u^{\alpha}\delta f-u^{\alpha}\delta\nabla_{\alpha}f)\sqrt{-g}d^{4}x\\
		&=-2a\int\nabla_{\alpha}u^{\alpha}\delta f\sqrt{-g}d^{4}x.
	\end{split}
\end{equation} It follows that 
\begin{equation}\label{bnu0}
	\nabla_{\alpha}u^{\alpha}=0 
\end{equation} for arbitrary variations of f. $ \nabla_{\alpha}u^{\alpha}=0 $ can also be obtained from (\ref{uab}) in an affine parameterization.
\subsection*{Variation of $S$ with respect to $u^{\nu}$} 
The action functionals $S^{F}$ and $S^{EH}$ do not depend on $X^{\mu}$ and  $\frac{\delta X^{\mu}}{\delta u^{\nu}}=f\delta^{\mu}_{\nu}$ is well defined, so the variation with respect to $u^{\nu}$ of $S$ depends only on $S^{G}$:
\begin{equation}\label{Su}
	\frac{\delta S^{G}}{\delta u^{\nu}}=-\int\frac{\delta }{\delta u^{\nu}}(\Phi\sqrt{-g})d^{4}x.
\end{equation}
From the collinearity $X_{\alpha}=fu_{\alpha}$, $\varPhi_{\alpha\beta}$ can be expressed as
\begin{equation}\label{phiu}
	\varPhi_{\alpha\beta}=\frac{f}{2}(\nabla_{\alpha}u_{\beta}+\nabla_{\beta}u_{\alpha})-\frac{1}{2}(u_{\beta}\nabla_{\alpha}f+u_{\alpha}\nabla_{\beta}f).
\end{equation}Using (\ref{bnu0}), $\Phi$ can be expressed as
\begin{equation}\label{Phiu}
	\Phi=-u^{\alpha}\nabla_{\alpha}f.
\end{equation}
Then, (\ref{Su}) becomes
\begin{equation}\label{dSu}
	\frac{\delta S^{G}}{\delta u^{\nu}}=\int[(\delta^{\alpha}_{\nu}\partial_{\alpha}f+u^{\alpha}\frac{\delta}{\delta u^{\nu}}(\partial_{\alpha}f) )\sqrt{-g}-\Phi\frac{\delta\sqrt{-g}}{\delta u^{\nu}}]d^{4}x
\end{equation} The third term is
\begin{equation}\label{sqrtg}
	\begin{split}
		-\Phi\frac{\delta\sqrt{-g}}{\delta u^{\nu}}&=-\Phi\frac{\delta\sqrt{-g}}{\delta g^{\alpha\beta}}\frac{\delta g^{\alpha\beta}}{\delta u^{\nu}}\\
		&=-\Phi g_{\alpha\beta}\sqrt{-g}(\delta^{\alpha}_{\nu}u^{\beta}+\delta^{\beta}_{\nu}u^{\alpha})\\
		&=-2\Phi\sqrt{-g}u_{\nu}
	\end{split}
\end{equation}so (\ref{dSu}) can be written as
\begin{equation}
	\frac{\delta S^{G}}{\delta u^{\nu}}=\int(\partial_{\nu}f+u^{\alpha}\frac{\delta}{\delta u^{\nu}}(\partial_{\alpha}f)-2\Phi u_{\nu})\sqrt{-g})d^{4}x
\end{equation}

Setting $\partial_{\alpha}f=u_{\alpha}P$ for some scalar $P$ requires
\begin{equation}
	u^{\alpha}\frac{\delta}{\delta u^{\nu}}(\partial_{\alpha}f)=-\frac{\delta P}{\delta u^{\nu}}-P u_{\nu}
\end{equation}using 
\begin{equation}
	u^{\alpha}\frac{\delta u_{\alpha}}{\delta u^{\nu}}=-u_{\nu}
\end{equation} from $u_{\alpha}u^{\alpha}=-1$. Thus, 
\begin{equation}
	\frac{\delta S^{G}}{\delta u^{\nu}}=-\int(\frac{\delta \Phi}{\delta u^{\nu}}\sqrt{-g}+\Phi \frac{\delta\sqrt{-g}}{\delta u^{\nu}})d^{4}x=\int(\partial_{\nu}f-\frac{\delta P}{\delta u^{\nu}}-(P+2\Phi) u_{\nu})\sqrt{-g})d^{4}x=0.
\end{equation}Choosing $P=\Phi$ generates
\begin{equation}\label{bunu}
	\partial_{\nu}f=\Phi u_{\nu}.
\end{equation}
\par Note that (\ref{bnu0}) and (\ref{bunu}) are consistent with $ \Phi=\nabla_{\alpha}X^{\alpha}+2u^{\alpha}u^{\beta}\nabla_{\alpha}X_{\beta} $:\begin{equation}\label{PhiPhi}
	\begin{split}
		\Phi&=\nabla_{\alpha}X^{\alpha}+2u^{\alpha}u^{\beta}\nabla_{\alpha}X_{\beta}\\
		&=f\nabla_{\alpha}u^{\alpha}+u^{\alpha}\partial_{\alpha}f+2u^{\alpha}u^{\beta}u_{\beta}\partial_{\alpha}f+2fu^{\alpha}u^{\beta}\nabla_{\alpha}u_{\beta}\\
		&=u^{\alpha}\partial_{\alpha}f-2u^{\alpha}\partial_{\alpha}f\\
		&=-u^{\alpha}u_{\alpha}\Phi\\
		&=\Phi.
	\end{split}
\end{equation}

\section*{Table 2: Calculation of $\alpha\gamma$ for 163 SPARC disc galaxies}
Format: Galaxy, $R_{p}$(kpc), $V_{b}$(km/s), $M_{b}\times10^{9}M\odot$,	$R_{p}$(m),	$V_{b}$(m/s), $M_{b}$(kg), $\alpha\gamma=\frac{V^{4}_{b}R^{2}_{p}}{G^{2}M^{2}_{b}}$

		\begin{longtable}{|l|l|l|l|l|l|l|l|}
		\hline
		D512-2 & 1.92 & 33.24 & 0.77 & 5.925E+19 & 3.324E+04 & 1.532E+39 & 41.05 \\ \hline
		D564-8 & 1.53 & 15.364 & 0.08 & 4.721E+19 & 1.536E+04 & 1.591E+38 & 110.23 \\ \hline
		D631-7 & 1.8 & 27.771 & 0.18 & 5.554E+19 & 2.777E+04 & 3.580E+38 & 321.69 \\ \hline
		DDO064 & 2.08 & 34.414 & 0.34 & 6.418E+19 & 3.441E+04 & 6.763E+38 & 283.91 \\ \hline
		DDO154 & 0.99 & 30.997 & 0.23 & 3.055E+19 & 3.100E+04 & 4.575E+38 & 92.51 \\ \hline
		DDO170 & 7.85 & 46.874 & 2.28 & 2.422E+20 & 4.687E+04 & 4.535E+39 & 309.51 \\ \hline
		ESO079-G014 & 13.36 & 136.226 & 50.45 & 4.122E+20 & 1.362E+05 & 1.003E+41 & 130.62 \\ \hline
		ESO116-G012 & 4.13 & 76.318 & 5.2 & 1.274E+20 & 7.632E+04 & 1.034E+40 & 115.74 \\ \hline
		ESO444-G084 & 1.29 & 37.552 & 0.38 & 3.981E+19 & 3.755E+04 & 7.558E+38 & 123.94 \\ \hline
		ESO563-G021 & 16.83 & 231.148 & 206.22 & 5.193E+20 & 2.311E+05 & 4.102E+41 & 102.84 \\ \hline
		F561-1 & 9.66 & 54.449 & 3.52 & 2.981E+20 & 5.445E+04 & 7.001E+39 & 358.02 \\ \hline
		F563-V1 & 7.87 & 30.956 & 1.56 & 2.428E+20 & 3.096E+04 & 3.103E+39 & 126.40 \\ \hline
		F563-V2 & 4.76 & 93.395 & 12.42 & 1.469E+20 & 9.340E+04 & 2.470E+40 & 60.44 \\ \hline
		F565-V2 & 7.54 & 54.901 & 3.19 & 2.327E+20 & 5.490E+04 & 6.345E+39 & 274.51 \\ \hline
		F567-2 & 9.59 & 50.888 & 3.91 & 2.959E+20 & 5.089E+04 & 7.777E+39 & 218.18 \\ \hline
		F568-1 & 13.23 & 98.873 & 18.86 & 4.082E+20 & 9.887E+04 & 3.751E+40 & 254.35 \\ \hline
		F568-3 & 15.94 & 69.172 & 11.3 & 4.919E+20 & 6.917E+04 & 2.248E+40 & 246.39 \\ \hline
		F568-V1 & 8.21 & 99.153 & 17.05 & 2.533E+20 & 9.915E+04 & 3.391E+40 & 121.21 \\ \hline
		F571-8 & 2.33 & 56.064 & 2.42 & 7.190E+19 & 5.606E+04 & 4.813E+39 & 49.53 \\ \hline
		F571-V1 & 9.72 & 59.084 & 4.74 & 2.999E+20 & 5.908E+04 & 9.428E+39 & 277.16 \\ \hline
		F574-1 & 12.6 & 77.98 & 13.44 & 3.888E+20 & 7.798E+04 & 2.673E+40 & 175.77 \\ \hline
		F574-2 & 10.83 & 39.279 & 1.63 & 3.342E+20 & 3.928E+04 & 3.242E+39 & 568.33 \\ \hline
		F579-V1 & 10.85 & 120.955 & 30.67 & 3.348E+20 & 1.210E+05 & 6.100E+40 & 144.88 \\ \hline
		F583-4 & 6.74 & 56.772 & 3.37 & 2.080E+20 & 5.677E+04 & 6.703E+39 & 224.74 \\ \hline
		IC4202 & 11.79 & 151.763 & 68.44 & 3.638E+20 & 1.518E+05 & 1.361E+41 & 85.14 \\ \hline
		KK98-251 & 2.79 & 27.948 & 0.23 & 8.609E+19 & 2.795E+04 & 4.575E+38 & 485.55 \\ \hline
		NGC0024 & 2.65 & 93.812 & 6.8 & 8.177E+19 & 9.381E+04 & 1.353E+40 & 63.62 \\ \hline
		NGC0055 & 13.5 & 61.912 & 5.43 & 4.166E+20 & 6.191E+04 & 1.080E+40 & 491.18 \\ \hline
		NGC0100 & 3.88 & 56.887 & 3.29 & 1.197E+20 & 5.689E+04 & 6.544E+39 & 78.78 \\ \hline
		NGC0247 & 10.76 & 89.827 & 16.57 & 3.320E+20 & 8.983E+04 & 3.296E+40 & 148.49 \\ \hline
		NGC0289 & 2.52 & 201.166 & 56.81 & 7.776E+19 & 2.012E+05 & 1.130E+41 & 17.43 \\ \hline
		NGC0300 & 4.54 & 65.226 & 4.73 & 1.401E+20 & 6.523E+04 & 9.408E+39 & 90.19 \\ \hline
		NGC0801 & 3.52 & 201.046 & 159.16 & 1.086E+20 & 2.010E+05 & 3.166E+41 & 4.32 \\ \hline
		NGC0891 & 3.24 & 236.359 & 59.02 & 9.998E+19 & 2.364E+05 & 1.174E+41 & 50.87 \\ \hline
		NGC1003 & 5.39 & 83.71 & 7.18 & 1.663E+20 & 8.371E+04 & 1.428E+40 & 149.66 \\ \hline
		NGC1090 & 8.17 & 133.907 & 43.81 & 2.521E+20 & 1.339E+05 & 8.714E+40 & 60.48 \\ \hline
		NGC1705 & 0.66 & 73.488 & 1.38 & 2.037E+19 & 7.349E+04 & 2.745E+39 & 36.08 \\ \hline
		NGC2366 & 5.11 & 30.269 & 0.23 & 1.577E+20 & 3.027E+04 & 4.575E+38 & 2241.07 \\ \hline
		NGC2403 & 2.28 & 100.233 & 8.81 & 7.035E+19 & 1.002E+05 & 1.752E+40 & 36.56 \\ \hline
		NGC2683 & 5.77 & 213.846 & 64.35 & 1.780E+20 & 2.138E+05 & 1.280E+41 & 90.94 \\ \hline
		NGC2841 & 3.44 & 313.632 & 220.13 & 1.061E+20 & 3.136E+05 & 4.378E+41 & 12.78 \\ \hline
		NGC2903 & 3.2 & 203.626 & 37.35 & 9.874E+19 & 2.036E+05 & 7.429E+40 & 68.25 \\ \hline
		NGC2915 & 0.67 & 36.051 & 0.33 & 2.067E+19 & 3.605E+04 & 6.564E+38 & 37.66 \\ \hline
		NGC2955 & 4.61 & 254.451 & 143.32 & 1.423E+20 & 2.545E+05 & 2.851E+41 & 23.46 \\ \hline
		NGC2976 & 1.74 & 74.321 & 2.42 & 5.369E+19 & 7.432E+04 & 4.813E+39 & 85.31 \\ \hline
		NGC2998 & 7.59 & 230.857 & 162.93 & 2.342E+20 & 2.309E+05 & 3.241E+41 & 33.34 \\ \hline
		NGC3109 & 3.61 & 36.801 & 0.9 & 1.114E+20 & 3.680E+04 & 1.790E+39 & 159.60 \\ \hline
		NGC3198 & 5.84 & 134.597 & 33.98 & 1.802E+20 & 1.346E+05 & 6.759E+40 & 52.43 \\ \hline
		NGC3521 & 4.12 & 217.115 & 49.85 & 1.271E+20 & 2.171E+05 & 9.915E+40 & 82.09 \\ \hline
		NGC3726 & 8.72 & 149.228 & 45.65 & 2.691E+20 & 1.492E+05 & 9.080E+40 & 97.87 \\ \hline
		NGC3769 & 3.5 & 113.072 & 12.38 & 1.080E+20 & 1.131E+05 & 2.462E+40 & 70.66 \\ \hline
		NGC3877 & 9.6 & 154.123 & 40.79 & 2.962E+20 & 1.541E+05 & 8.113E+40 & 169.04 \\ \hline
		NGC3893 & 5.24 & 176.129 & 39.59 & 1.617E+20 & 1.761E+05 & 7.874E+40 & 91.18 \\ \hline
		NGC3917 & 7.85 & 121.605 & 26.36 & 2.422E+20 & 1.216E+05 & 5.243E+40 & 104.89 \\ \hline
		NGC3949 & 3.5 & 154.94 & 22.35 & 1.080E+20 & 1.549E+05 & 4.445E+40 & 76.44 \\ \hline
		NGC3953 & 10.47 & 224.437 & 115.38 & 3.231E+20 & 2.244E+05 & 2.295E+41 & 113.00 \\ \hline
		NGC3972 & 6.11 & 118.902 & 17.14 & 1.885E+20 & 1.189E+05 & 3.409E+40 & 137.37 \\ \hline
		NGC3992 & 13.79 & 274.1 & 280.12 & 4.255E+20 & 2.741E+05 & 5.572E+41 & 73.99 \\ \hline
		NGC4010 & 7.85 & 91.953 & 12.83 & 2.422E+20 & 9.195E+04 & 2.552E+40 & 144.75 \\ \hline
		NGC4013 & 7.69 & 197.205 & 54.92 & 2.373E+20 & 1.972E+05 & 1.092E+41 & 160.38 \\ \hline
		NGC4051 & 8.72 & 161 & 54.92 & 2.691E+20 & 1.610E+05 & 1.092E+41 & 91.61 \\ \hline
		NGC4085 & 4.45 & 97.751 & 8.12 & 1.373E+20 & 9.775E+04 & 1.615E+40 & 148.31 \\ \hline
		NGC4088 & 6.98 & 179 & 47.74 & 2.154E+20 & 1.790E+05 & 9.495E+40 & 118.70 \\ \hline
		NGC4100 & 6.98 & 202.677 & 56.27 & 2.154E+20 & 2.027E+05 & 1.119E+41 & 140.43 \\ \hline
		NGC4138 & 2.61 & 208.959 & 36.28 & 8.054E+19 & 2.090E+05 & 7.216E+40 & 53.37 \\ \hline
		NGC4157 & 8.72 & 196.216 & 62 & 2.691E+20 & 1.962E+05 & 1.233E+41 & 158.59 \\ \hline
		NGC4183 & 8.72 & 113.82 & 19.63 & 2.691E+20 & 1.138E+05 & 3.904E+40 & 179.12 \\ \hline
		NGC4214 & 1.04 & 70.733 & 1.65 & 3.209E+19 & 7.073E+04 & 3.282E+39 & 53.79 \\ \hline
		NGC4217 & 3.5 & 152.979 & 24.88 & 1.080E+20 & 1.530E+05 & 4.949E+40 & 58.62 \\ \hline
		NGC4389 & 4.45 & 65.338 & 3.49 & 1.373E+20 & 6.534E+04 & 6.942E+39 & 160.25 \\ \hline
		NGC4559 & 4.58 & 104 & 15.07 & 1.413E+20 & 1.040E+05 & 2.997E+40 & 58.44 \\ \hline
		NGC5005 & 3.44 & 245.314 & 96.56 & 1.061E+20 & 2.453E+05 & 1.921E+41 & 24.86 \\ \hline
		NGC5033 & 2.28 & 258.067 & 78.78 & 7.035E+19 & 2.581E+05 & 1.567E+41 & 20.09 \\ \hline
		NGC5055 & 6.46 & 197.327 & 75.39 & 1.993E+20 & 1.973E+05 & 1.500E+41 & 60.21 \\ \hline
		NGC5371 & 14.48 & 246.665 & 227.56 & 4.468E+20 & 2.467E+05 & 4.526E+41 & 81.07 \\ \hline
		NGC5585 & 4.79 & 49.99 & 1.92 & 1.478E+20 & 4.999E+04 & 3.819E+39 & 210.22 \\ \hline
		NGC5907 & 12.58 & 241.545 & 187.91 & 3.882E+20 & 2.415E+05 & 3.738E+41 & 82.51 \\ \hline
		NGC5985 & 14.1 & 295.602 & 331.87 & 4.351E+20 & 2.956E+05 & 6.601E+41 & 74.54 \\ \hline
		NGC6015 & 4.6 & 152 & 32.97 & 1.419E+20 & 1.520E+05 & 6.558E+40 & 56.20 \\ \hline
		NGC6195 & 6.34 & 233.455 & 184.17 & 1.956E+20 & 2.335E+05 & 3.663E+41 & 19.04 \\ \hline
		NGC6503 & 2.28 & 130.425 & 10.68 & 7.035E+19 & 1.304E+05 & 2.124E+40 & 71.33 \\ \hline
		NGC6674 & 12.54 & 297.42 & 312.79 & 3.869E+20 & 2.974E+05 & 6.221E+41 & 68.02 \\ \hline
		NGC6789 & 0.71 & 38.604 & 0.3 & 2.191E+19 & 3.860E+04 & 5.967E+38 & 67.28 \\ \hline
		NGC6946 & 0.19 & 180.511 & 39.23 & 5.863E+18 & 1.805E+05 & 7.803E+40 & 0.13 \\ \hline
		NGC7331 & 4.27 & 257.417 & 112.65 & 1.318E+20 & 2.574E+05 & 2.241E+41 & 34.12 \\ \hline
		NGC7793 & 3.16 & 74.515 & 3.75 & 9.751E+19 & 7.452E+04 & 7.459E+39 & 118.40 \\ \hline
		NGC7814 & 0.63 & 303.055 & 56.69 & 1.944E+19 & 3.031E+05 & 1.128E+41 & 5.63 \\ \hline
		PGC51017 & 1.65 & 19.937 & 0.09 & 5.091E+19 & 1.994E+04 & 1.790E+38 & 287.21 \\ \hline
		UGC00128 & 16.22 & 118.742 & 48.89 & 5.005E+20 & 1.187E+05 & 9.724E+40 & 118.35 \\ \hline
		UGC00191 & 4.97 & 61.719 & 3.76 & 1.534E+20 & 6.172E+04 & 7.479E+39 & 137.12 \\ \hline
		UGC00634 & 4.51 & 98.707 & 14.72 & 1.392E+20 & 9.871E+04 & 2.928E+40 & 48.20 \\ \hline
		UGC00731 & 8.19 & 66.746 & 4.94 & 2.527E+20 & 6.675E+04 & 9.826E+39 & 295.05 \\ \hline
		UGC00891 & 4.45 & 44.487 & 1.52 & 1.373E+20 & 4.449E+04 & 3.023E+39 & 181.57 \\ \hline
		UGC01230 & 8.6 & 76.964 & 14.49 & 2.654E+20 & 7.696E+04 & 2.882E+40 & 66.85 \\ \hline
		UGC01281 & 3.46 & 33.634 & 0.76 & 1.068E+20 & 3.363E+04 & 1.512E+39 & 143.46 \\ \hline
		UGC02023 & 3.78 & 37.545 & 1.01 & 1.166E+20 & 3.755E+04 & 2.009E+39 & 150.53 \\ \hline
		UGC02259 & 4.07 & 84.829 & 6.61 & 1.256E+20 & 8.483E+04 & 1.315E+40 & 106.18 \\ \hline
		UGC02455 & 4.03 & 37.736 & 0.46 & 1.244E+20 & 3.774E+04 & 9.149E+38 & 841.77 \\ \hline
		UGC02487 & 10.35 & 392.161 & 684.46 & 3.194E+20 & 3.922E+05 & 1.361E+42 & 29.25 \\ \hline
		UGC02885 & 6.82 & 266.248 & 359.41 & 2.104E+20 & 2.662E+05 & 7.149E+41 & 9.79 \\ \hline
		UGC02916 & 2.6 & 214.853 & 69.67 & 8.023E+19 & 2.149E+05 & 1.386E+41 & 16.05 \\ \hline
		UGC02953 & 7.2 & 272.586 & 169.58 & 2.222E+20 & 2.726E+05 & 3.373E+41 & 53.83 \\ \hline
		UGC03205 & 7.27 & 223.969 & 101.05 & 2.243E+20 & 2.240E+05 & 2.010E+41 & 70.44 \\ \hline
		UGC03546 & 0.63 & 284.649 & 51.84 & 1.944E+19 & 2.846E+05 & 1.031E+41 & 5.24 \\ \hline
		UGC03580 & 1.37 & 69.773 & 4.3 & 4.227E+19 & 6.977E+04 & 8.553E+39 & 13.01 \\ \hline
		UGC04305 & 5.02 & 35.2 & 0.45 & 1.549E+20 & 3.520E+04 & 8.951E+38 & 1033.31 \\ \hline
		UGC04325 & 3.49 & 95.414 & 6.88 & 1.077E+20 & 9.541E+04 & 1.368E+40 & 115.35 \\ \hline
		UGC04483 & 0.73 & 15.778 & 0.02 & 2.253E+19 & 1.578E+04 & 3.978E+37 & 446.55 \\ \hline
		UGC04499 & 4.55 & 53.352 & 2.47 & 1.404E+20 & 5.335E+04 & 4.913E+39 & 148.70 \\ \hline
		UGC05005 & 10.17 & 55.229 & 5.51 & 3.138E+20 & 5.523E+04 & 1.096E+40 & 171.43 \\ \hline
		UGC05253 & 3.68 & 244.654 & 121.58 & 1.136E+20 & 2.447E+05 & 2.418E+41 & 17.75 \\ \hline
		UGC05414 & 4.11 & 45.267 & 1.43 & 1.268E+20 & 4.527E+04 & 2.844E+39 & 187.59 \\ \hline
		UGC05716 & 2.07 & 56.259 & 2.4 & 6.387E+19 & 5.626E+04 & 4.774E+39 & 40.31 \\ \hline
		UGC05721 & 1.17 & 55.487 & 0.79 & 3.610E+19 & 5.549E+04 & 1.571E+39 & 112.45 \\ \hline
		UGC05750 & 9.43 & 55.705 & 4.9 & 2.910E+20 & 5.571E+04 & 9.746E+39 & 192.88 \\ \hline
		UGC05764 & 2.9 & 47.483 & 1.2 & 8.949E+19 & 4.748E+04 & 2.387E+39 & 160.57 \\ \hline
		UGC05829 & 6.91 & 53.77 & 2.33 & 2.132E+20 & 5.377E+04 & 4.634E+39 & 397.63 \\ \hline
		UGC05918 & 3.9 & 42.6 & 1.2 & 1.203E+20 & 4.260E+04 & 2.387E+39 & 188.14 \\ \hline
		UGC05986 & 5.02 & 92.006 & 7.25 & 1.549E+20 & 9.201E+04 & 1.442E+40 & 185.81 \\ \hline
		UGC06399 & 6.11 & 68 & 4.92 & 1.885E+20 & 6.800E+04 & 9.786E+39 & 178.35 \\ \hline
		UGC06446 & 3.49 & 64.567 & 3.75 & 1.077E+20 & 6.457E+04 & 7.459E+39 & 81.42 \\ \hline
		UGC06614 & 3.23 & 221.403 & 96.68 & 9.967E+19 & 2.214E+05 & 1.923E+41 & 14.51 \\ \hline
		UGC06628 & 7.69 & 50.444 & 2.67 & 2.373E+20 & 5.044E+04 & 5.311E+39 & 290.50 \\ \hline
		UGC06667 & 7.85 & 95.442 & 20.64 & 2.422E+20 & 9.544E+04 & 4.105E+40 & 64.92 \\ \hline
		UGC06786 & 1.3 & 193.671 & 62.08 & 4.011E+19 & 1.937E+05 & 1.235E+41 & 3.34 \\ \hline
		UGC06787 & 0.71 & 253.781 & 59.7 & 2.191E+19 & 2.538E+05 & 1.187E+41 & 3.17 \\ \hline
		UGC06917 & 10.47 & 81.355 & 9.92 & 3.231E+20 & 8.136E+04 & 1.973E+40 & 263.93 \\ \hline
		UGC06923 & 3.72 & 57.634 & 2.5 & 1.148E+20 & 5.763E+04 & 4.973E+39 & 132.13 \\ \hline
		UGC06930 & 10.47 & 97.495 & 16.47 & 3.231E+20 & 9.750E+04 & 3.276E+40 & 197.48 \\ \hline
		UGC06973 & 1.74 & 192.189 & 15.58 & 5.369E+19 & 1.922E+05 & 3.099E+40 & 92.04 \\ \hline
		UGC06983 & 2.61 & 87.661 & 10.2 & 8.054E+19 & 8.766E+04 & 2.029E+40 & 20.91 \\ \hline
		UGC07089 & 6.11 & 56.586 & 3.98 & 1.885E+20 & 5.659E+04 & 7.916E+39 & 130.69 \\ \hline
		UGC07125 & 13 & 53.314 & 3.52 & 4.011E+20 & 5.331E+04 & 7.001E+39 & 596.00 \\ \hline
		UGC07151 & 5.5 & 61.099 & 2.93 & 1.697E+20 & 6.110E+04 & 5.828E+39 & 265.59 \\ \hline
		UGC07232 & 0.62 & 28.361 & 0.12 & 1.913E+19 & 2.836E+04 & 2.387E+38 & 93.41 \\ \hline
		UGC07261 & 2.86 & 66.377 & 2.92 & 8.825E+19 & 6.638E+04 & 5.808E+39 & 100.72 \\ \hline
		UGC07323 & 5.23 & 68.446 & 4.25 & 1.614E+20 & 6.845E+04 & 8.453E+39 & 179.76 \\ \hline
		UGC07399 & 1.22 & 70.4 & 3.71 & 3.765E+19 & 7.040E+04 & 7.379E+39 & 14.37 \\ \hline
		UGC07524 & 10.35 & 66.036 & 5.12 & 3.194E+20 & 6.604E+04 & 1.018E+40 & 420.28 \\ \hline
		UGC07577 & 1.51 & 13.286 & 0.03 & 4.659E+19 & 1.329E+04 & 5.967E+37 & 426.94 \\ \hline
		UGC07603 & 1.71 & 40.98 & 0.55 & 5.277E+19 & 4.098E+04 & 1.094E+39 & 147.45 \\ \hline
		UGC07608 & 4.78 & 45.299 & 1.47 & 1.475E+20 & 4.530E+04 & 2.924E+39 & 240.80 \\ \hline
		UGC07690 & 1.18 & 54.581 & 1.02 & 3.641E+19 & 5.458E+04 & 2.029E+39 & 64.24 \\ \hline
		UGC07866 & 2.32 & 26.37 & 0.23 & 7.159E+19 & 2.637E+04 & 4.575E+38 & 266.10 \\ \hline
		UGC08286 & 4.25 & 75.61 & 4.28 & 1.311E+20 & 7.561E+04 & 8.513E+39 & 174.30 \\ \hline
		UGC08490 & 1.36 & 66.231 & 2.01 & 4.197E+19 & 6.623E+04 & 3.998E+39 & 47.64 \\ \hline
		UGC08550 & 1.47 & 47.479 & 0.74 & 4.536E+19 & 4.748E+04 & 1.472E+39 & 108.46 \\ \hline
		UGC08699 & 0.65 & 213.626 & 34.21 & 2.006E+19 & 2.136E+05 & 6.804E+40 & 4.07 \\ \hline
		UGC09037 & 13.41 & 105.162 & 26.45 & 4.138E+20 & 1.052E+05 & 5.261E+40 & 170.03 \\ \hline
		UGC09133 & 1.58 & 281.681 & 190.37 & 4.875E+19 & 2.817E+05 & 3.786E+41 & 2.35 \\ \hline
		UGC09992 & 3.12 & 33.537 & 0.54 & 9.627E+19 & 3.354E+04 & 1.074E+39 & 228.40 \\ \hline
		UGC10310 & 6.64 & 66.497 & 4.19 & 2.049E+20 & 6.650E+04 & 8.334E+39 & 265.58 \\ \hline
		UGC11455 & 14.56 & 216.144 & 161.99 & 4.493E+20 & 2.161E+05 & 3.222E+41 & 95.37 \\ \hline
		UGC11557 & 8.8 & 54.014 & 2.64 & 2.715E+20 & 5.401E+04 & 5.251E+39 & 511.52 \\ \hline
		UGC11914 & 3.28 & 291.722 & 121.53 & 1.012E+20 & 2.917E+05 & 2.417E+41 & 28.53 \\ \hline
		UGC12506 & 17.21 & 244.007 & 258.74 & 5.310E+20 & 2.440E+05 & 5.146E+41 & 84.82 \\ \hline
		UGC12632 & 6.4 & 60.031 & 4.48 & 1.975E+20 & 6.003E+04 & 8.911E+39 & 143.35 \\ \hline
		UGC12732 & 6.72 & 65.602 & 5.44 & 2.074E+20 & 6.560E+04 & 1.082E+40 & 152.86 \\ \hline
		UGCA281 & 0.74 & 24.303 & 0.31 & 2.283E+19 & 2.430E+04 & 6.166E+38 & 10.75 \\ \hline
		UGCA442 & 2.96 & 41.699 & 0.98 & 9.134E+19 & 4.170E+04 & 1.949E+39 & 149.18 \\ \hline
		UGCA444 & 2.33 & 25.915 & 0.18 & 7.190E+19 & 2.592E+04 & 3.580E+38 & 408.74 \\ \hline
	\end{longtable}

\end{document}